\newcommand{\cK}{{\cal K}}
\newcommand{\Rset}{\mathbb{R}}
\newcommand{\Nset}{\mathbb{N}}
\newcommand{\inte}{{\mathrm {\tt int}}}
\newcommand{\comment}[1]{}
\renewcommand{\t}{^{\mbox{\tiny\sf T}}}
\newtheorem{theo}{Theorem}
\newtheorem*{theo*}{Theorem}
\newtheorem{defi}[theo]{Definition}
\newtheorem{lemm}[theo]{Lemma}
\newtheorem*{prop*}{Proposition}
\theoremstyle{definition}
\newtheorem{rema}[theo]{Remark}
\newtheorem*{rema*}{Remark}
\newtheorem*{remas*}{Remarks}
\begin{document}

\title{Ensuring successful introduction of {\em Wolbachia} in natural 
populations of {\em Aedes aegypti} by means of feedback control}




\author{Pierre-Alexandre Bliman\thanks{Escola de Matem\'atica Aplicada, Funda\c c\~ ao Getulio Vargas, Praia de Botafogo 190,   22250-900 Rio de Janeiro - RJ, Brazil and Inria, Domaine de Voluceau, Rocquencourt BP105, 78153 Le Chesnay cedex, France,
\href{mailto:pierre-alexandre.bliman@inria.fr}{pierre-alexandre.bliman@inria.fr}} \and M.\ Soledad Aronna\thanks{IMPA, Estrada Dona Castorina 110, 22460-320 Rio de Janeiro - RJ, Brazil and Escola de Matem\'atica Aplicada, Funda\c c\~ ao Getulio Vargas, Praia de Botafogo 190,   22250-900 Rio de Janeiro - RJ, Brazil, \href{mailto:aronna@impa.br}{aronna@impa.br}} \and Fl\'avio C.\ Coelho\thanks{Escola de Matem\'atica Aplicada, Funda\c c\~ ao Getulio Vargas, Praia de Botafogo 190,   22250-900 Rio de Janeiro - RJ, Brazil, \href{mailto:fccoelho@fgv.br}{fccoelho@fgv.br}} \and Moacyr A.H.B.\ da Silva\thanks{Escola de Matem\'atica Aplicada, Funda\c c\~ ao Getulio Vargas, Praia de Botafogo 190,   22250-900 Rio de Janeiro - RJ, Brazil, \href{mailto:moacyr@fgv.br}{moacyr@fgv.br}}}

\date{\today}


\maketitle

\begin{abstract}
The control of the spread of dengue fever by introduction of the intracellular parasitic bacterium {\em Wolbachia} in populations of the vector {\em Aedes aegypti}, is presently one of the most promising tools for eliminating dengue, in the absence of an efficient vaccine. 
The success of this operation requires locally careful planning to determine the adequate number of individuals carrying the {\em Wolbachia} parasite that need to be introduced into the natural population.
The  introduced mosquitoes are expected to eventually replace the {\em Wolbachia}-free 
population and guarantee permanent protection against the transmission of 
dengue to human.

In this study, we propose and analyze a model describing the fundamental 
aspects of the competition between mosquitoes carrying {\em Wolbachia} and mosquitoes free of the parasite.
We then use feedback control techniques to devise an introduction protocol which is proved to guarantee that the population converges to a stable equilibrium where the totality of mosquitoes carry {\em Wolbachia}.

\vspace{.15cm}
\noindent
{\bf Keywords} Wolbachia; Global stabilization; Monotone systems; Input-output characteristic

\vspace{.15cm}
\noindent
{\bf Mathematics Subject Classification (2000)} Primary 92D30; Secondary 34C12, 93D15, 93D25

\end{abstract}

\tableofcontents


\section{Introduction}

\subsection{Arboviroses and vector control}

Arboviruses (arthropod borne viruses) are viruses transmitted to humans by arthropod, such as the mosquito.
They are pathogens of many and important diseases, putting at risk considerable portions of the human population, and infecting 
 millions of people every year.
Mosquitoes ({\em Culicidae} family of the insects) are a huge public 
health concern as they are vectors of many arboviroses such as yellow fever, 
dengue and chikungunya.

The control of these diseases can be achieved by acting on the population of 
mosquitoes, and in absence of vaccine or curative treatment, it is essentially the only feasible way.
Application of insecticides for both adults and larvae and mechanical removal 
of breeding sites are the most popular strategies to control the population of mosquitoes.
The intensive use of insecticides, however, has negative 
impacts for humans, animals and the environment.
Besides, the gradual increase of mosquito resistance to insecticides usually leads to partial or complete decrease of the efficiency of this strategy in the long run 
\citep{MacieldeFreitas2014,montella2007insecticide}.
In addition to chemical control and mechanical removal of the breeding sites, 
alternative or supplementary vector control strategies have been proposed and 
implemented, such as the release of transgenic or sterile mosquitoes 
\citep{Alphey2010,Alphey2014}.
Notice that an intrinsic weakness of the techniques listed 
above lies paradoxically in the fact that they aim at the local eradication of 
the vector, whose disappearance offers no protection against subsequent 
reinvasions.

Recently the release of {\em Aedes aegypti} mosquitoes infected by the bacterium
{\em Wolbachia} has been proposed as a promising strategy to control
dengue and chikungunya, due to the fact that this bacterium limits severely the 
vectorial competence of {\em Aedes aegypti}.
The international program Eliminate Dengue \citep{EliminateDengue} is currently testing in the field this strategy, in 
several locations around the world: Australia, Indonesia, Vietnam, Colombia and 
Brazil.
The release of infected mosquitoes with {\em Wolbachia} has the advantages of being 
safe for humans and the environment and inexpensive when compared to other 
control strategies \citep{popovici2010assessing}.

\subsection{{\em Wolbachia} sp.\ as a biological control tool}

{\em Wolbachia} sp.\ is a {\em genus} of bacteria that is a common 
intracellular parasite of many species of arthropods.
It is often found in anthropophilic mosquitoes such as {\em Aedes albopictus} or 
{\em Culex quinquefasciatus} but there is no report of {\em Aedes aegypti} naturally 
infected by this bacterium \citep{rasgon2004initial}. 

There is evidence that the spread of certain strains of {\em Wolbachia} in 
populations of {\em Aedes aegypti} drastically reduces the vector 
competence of the mosquito for dengue and other diseases \citep{Moreira2009,Blagrove:2012aa,Blagrove:2013aa}.
Some strains of {\em Wolbachia} reduce the lifespan of the mosquito, consequently limiting the proportion of surviving mosquitoes at the completion of the incubation period.
More importantly, {\em Wolbachia} appears to decrease the virulence of 
the dengue infection in the mosquitoes, increasing the incubation period or 
blocking the virus, which also reduces the overall vector 
competence.

The infestation of natural {\em Aedes aegypti} populations by {\em Wolbachia}-conta\-mi\-na\-ted strains can be achieved by releasing in the field a large number 
of {\em Wolbachia}-infected mosquitoes bred in laboratory.
Experiments have been conducted successfully in Australia \citep{Hoffmann2011}, 
Vietnam, Indonesia and are currently being applied in Brazil, within the 
Eliminate Dengue Program.
In these experiments, the introduction of a number of {\em Wolbachia}-infected mosquitoes in the population triggered a 
{\em Wolbachia} outbreak whose outcome was the fixation of the bacteria in the 
population, with more than 90\% of prevalence.
The effects of this fixation of {\em Wolbachia} on the dynamics of dengue in the 
field 
is currently under investigation, but preliminary results are encouraging 
\citep{Frentiu2014}.
If tractable in practice, this method has certainly the advantage 
of offering certain resilience to subsequent invasion of {\em Wolbachia}-free mosquitoes.

Several mathematical models of the dynamics of invasion of {\em Wolbachia}
in a population of mosquitoes have been proposed, with distinct objectives.
For example, \cite{turelli2010cytoplasmic} describes a simple model with a single differential
equation, sufficient to reveal the bistable nature of the {\em Wolbachia} dynamics.
Models for spatial dispersion are analyzed in \citep{barton2011spatial} and \citep{hancock2012modelling}.
In \citep{Hughes:2013aa,ndii2015modelling}, models are presented that assess the effect of the {\em Wolbachia} in dengue dynamics.
\cite{koiller:hal-00939411} describes a data driven model suitable to estimate accurately
some biological parameters by fitting the model with field and lab data.
The model to be presented here is a simplified version of the latter.

\subsection{Description of the problem}

A key question about the introduction of {\em Wolbachia} in 
wild mosquitoes concerns the effective strategies of release of infected 
mosquitoes in the field that can be applied with limited cost to reach the desired state of 100\% of {\em Wolbachia}-carrying mosquitoes.
In this paper we propose and analyze a simple model of the dynamics of {\em Wolbachia}, that allows to investigate these strategies.

The main features of the natural dynamics of {\em Wolbachia} that have to be present 
in the model are the vertical transmission and the peculiar interference on the reproductive outcomes induced by cytoplasmic incompatibility \citep{InfluencialPassengers98}.
The transmission of {\em Wolbachia} occurs only vertically (i.e.\ from mother to the 
offspring), there is no transmission by contact.
Cytoplasmic incompatibility (CI) occurs when a female uninfected by {\em Wolbachia} is inseminated by an infected male, a mating that leads to sterile eggs.
It provides a benefit to infected females against uninfected ones and therefore facilitates the {\em Wolbachia} spread.
The strains {\em w}Mel and {\em w}MelPop of {\em Wolbachia} that are being used in the field experiments with {\em Aedes aegypti} induce almost total CI \citep{walker2011wmel}.
Table \ref{tab:incompatibilidade-citoplasm=0000E1tica} schematizes the results of the mating of infected and uninfected mosquitoes when the CI is 100\%.
The model that we propose below captures all of these features and is simple enough to allow for a fairly complete analysis.

\begin{table}
\begin{center}
\begin{tabular}{|c||c|c|}
\hline 
  & Uninfected $\male$ & Infected $\male$\\
\hline 
\hline 
Uninfected $\female$ & \bf Uninfected & \bf \color{red}Sterile eggs \\
\hline 
Infected $\female$ & \bf Infected  & \bf Infected \\
\hline 
\end{tabular}
\end{center}
\caption{\label{tab:incompatibilidade-citoplasm=0000E1tica}Cytoplasmic incompatibility and vertical transmission of {\em Wolbachia} bacteria.
The state of the offspring is indicated, depending on the parents status}
\label{ta2}
\end{table}

The corresponding system is shown to possess two unstable equilibria, which correspond to 
extinction of the two populations and a coexistence equilibrium; and two locally 
asymptotically stable equilibria, which correspond to {\em Wolbachia}-free and 
complete infestation equilibria.
The release strategy we propose here is based on techniques from Control theory --- the released quantity of infected mosquitoes being the 
{\em control input}.
Using the fact that measurements are achieved and available during the whole 
release process, we propose a simple {\em feedback control law} that uses this 
knowledge to compute the input value.
The key result of the paper (Theorem \ref{th12}) proves that this control law has the capacity to 
asymptotically settle the bacterium {\em from whatever initial conditions}, and 
in particular from the completely {\em Wolbachia}-free equilibrium.
A major advantage of feedback compared to {\em open-loop approaches} (when the 
release schedule is computed once for all before the beginning of the 
experiment), is its ability to cope with the uncertainties on the model dynamics 
(e.g.\ in the modeling of the life stages and the population structure), on the 
parameters (population size, mortality, reproductive rates, etc.), and on the 
conditions of the realization (in particular on the size of the population to be 
treated).

Up to our knowledge, the present paper constitutes the first attempt to use 
feedback approach for introduction of {\em Wolbachia} within a population of arthropods.
Notice that we treat here  only the case of the release of
{\em Wolbachia-positive larvae} and full information on the quantity of 
{\em Wolbachia-negative larvae}.
Yet, the same dynamical model offers the ability to study other configurations, 
both for the control and the observation, and the corresponding issues will be 
examined in future work.



The paper is organized as follows.
The simple model used in the sequel is introduced in Section \ref{se1}, and normalized.
The analysis of the uncontrolled model is made in Section \ref{se2}, showing the bistability announced above between {\em Wolbachia}-free equilibrium and full infestation.
A proportional control law is then proposed in Section \ref{se3}, and proved to lead to global stability of the full infestation equilibrium.
Simulations are provided in Section \ref{se4}.
Last, concluding remarks complete the text in Section \ref{se5}.

\paragraph{Notation} For $n\in \Nset,$ we let $\Rset^n$ denote the $n$-dimensional Euclidean real space, and with $\Rset^n_+$ we refer to the cone consisting of vectors in $\Rset^n$ with {\em nonnegative} components.
We write $\max\{a;b\}$ (resp.\ $\min\{a;b\}$) for the maximum (resp.\ minimum) of two real numbers $a,b$.

\section{A simple model of infestation by {\em Wolbachia}}
\label{se1}

The simplified compartment model we introduce includes two life stages:
a preliminary one, gathering the aquatic phases (egg and larva) where the mosquitoes are subject to space and food competition; and an adult one, representing all the posterior aerial phases (pupae, immature and mature adult).
Accordingly, we will denote $\mathbf L$ and $\mathbf A$ the corresponding state variables.
The uninfected and infected (by {\em Wolbachia}) populations will be distinguished by indexes $U$ and $W$ respectively, so we end up with a four state variables model, namely $\mathbf L_U, \mathbf L_W$ and $\mathbf A_U, \mathbf A_W$, that represent the numbers of uninfected, resp.\ infected, vectors in preliminary and adult phases.

We propose the following evolution model.
\begin{subequations}
\label{eq1}
\begin{gather}
\label{eq1a}
\dot {\mathbf L}_U
= \alpha_U \frac{\mathbf A_U}{\mathbf A_U+\mathbf A_W}\mathbf A_U -\nu \mathbf L_U - \mu (1+k (\mathbf L_W+\mathbf L_U))\mathbf L_U \\
\label{eq1b}
\dot {\mathbf A}_U = \nu \mathbf L_U -\mu_U \mathbf A_U\\
\label{eq1c}
\dot {\mathbf L}_W = \alpha_W \mathbf A_W -\nu \mathbf L_W - \mu (1+k (\mathbf L_W+\mathbf L_U))\mathbf L_W + \mathbf u \\
\label{eq1d}
\dot {\mathbf A}_W = \nu \mathbf L_W -\mu_W \mathbf A_W
\end{gather}
\end{subequations}
All the parameters are positive, their meaning is summarized in Table \ref{ta1}.
\begin{table}
\begin{center}
\begin{tabular}{|r|l|}
\hline
Notation & Meaning\\
\hline\hline
$\alpha_U, \alpha_W$ & Fecundity rates of uninfected and infected insects\\
$\nu$ & Rate of transfer from the preliminary to the adult stage\\
$\mu$ & Mortality rate of uninfected and infected insects in preliminary stage\\
$\mu k$ & Characteristic of the additional mortality rate in preliminary stage\\
$\mu_U, \mu_W$ & Mortality rates of uninfected and infected insects at adult stage\\
\hline
\end{tabular}
\caption{List of parameters of model \eqref{eq1}}
\label{ta1}
\end{center}
\end{table}

Most aspects of this compartment model are rather classical, we now comment on the most original modeling choices.
 The differences between the behaviors of the two populations lie in the different fecundity and mortality rates.
The mortality during the larva stage and the duration of the latter are considered unmodified by the disease.
Also, the (quadratic) competition term is assumed to act equally on both populations, with an effect proportional to the global number of larvae.

The first effect of {\em Wolbachia} is to diminish fertility and life duration, leading to reduced fitness for the infected mosquitoes.
This assumption will correspond to the choice of parameters made in \eqref{eq2} below.
The second effect of {\em Wolbachia}, namely the cytoplasmic incompatibility, modeled here as complete, forbids fecund mating between infected males and uninfected females.
This is rendered in \eqref{eq1a} by a recruitment term proportional altogether to the  {\em number} and to the {\em ratio} of uninfected adults.
Notice in this respect that the model apparently does {\em not} make distinction between males and females.
In fact, one could introduce in place of the two variables $\mathbf A_U, \mathbf A_W$, four variables $\mathbf F_U, \mathbf F_W, \mathbf M_U, \mathbf M_W$ representing the quantities of female and male adults.
The recruitment terms in \eqref{eq1c}, resp.\ \eqref{eq1a}, would then naturally be replaced by expressions proportional to $\mathbf F_W$, resp.\ $\frac{\mathbf M_U}{\mathbf M_U+\mathbf M_W}\mathbf F_U$.
However it is easy to see that the proportion between males and females remains constant  in this more complex system, as long as the mortality rates for healthy and infected insects are equal for the males and the females, and the passage from larvae to adults occurs with a constant sex ratio.
One can therefore use a single variable to take account of the uninfected adults, and a single one to take account of the infected adults, just as done in \eqref{eq1}.
In other terms, provided the sex ratio is constant and the mortality is sex-independent, a sexual model yields no more information than \eqref{eq1}.

As a last comment, notice the term $\mathbf u$ in equation \eqref{eq1c}.
The latter is an {\em input variable}, modeling the on-purpose introduction of infected larvae in the system in order to settle {\em Wolbachia}.

\subsection{Normalization and general assumption}

In order to reduce the number of parameters and to exhibit meaningful quantities, we now normalize model \eqref{eq1}.
Defining
\begin{gather*}
 L_\eta (t) := \frac{k\mu}{\nu+\mu} \mathbf L_\eta\left(
\frac{t}{\nu+\mu}
\right),\quad
 A_\eta (t) := \frac{k\mu}{\nu} \mathbf A_\eta\left(
\frac{t}{\nu+\mu}
\right),\qquad
\eta = U,W \\
u(t) := \frac{k\mu}{(\nu+\mu)^2} \mathbf u \left(
\frac{t}{\nu+\mu}
\right)
\end{gather*}
with the following choice of {\em dimensionless} parameters
\begin{equation}
\label{eq900}
\gamma_\eta := \frac{\mu_\eta}{\nu+\mu},\quad
{\cal R}_0^\eta := \frac{\nu\alpha_\eta}{(\nu+\mu)\mu_\eta},\qquad
\eta = U,W
\end{equation}
the following normalized model is deduced, that will be used in the remainder of the paper.
\begin{subequations}
\label{eq10}
\begin{gather}
\label{eq10a}
\dot L_U = \gamma_U{\cal R}_0^U \frac{ A_U}{ A_U+ A_W} A_U - (1+ L_W+ L_U) L_U \\
\label{eq10b}
\dot A_U =  L_U -\gamma_U  A_U\\
\label{eq10c}
\dot L_W = \gamma_W {\cal R}_0^W  A_W - (1+ L_W+ L_U) L_W + u \\
\label{eq10d}
\dot A_W = L_W -\gamma_W  A_W
\end{gather}
\end{subequations}
The state variable for system \eqref{eq10} will be denoted
\begin{equation*}
x := (L_U,A_U,L_W,A_W)\ ,
\end{equation*}
and for sake of simplicity, we write \eqref{eq10} as
\begin{equation}
\label{eq00}
\dot x = f(x)+ Bu,
\end{equation}
where $f$ and $B$ are defined as
\begin{equation*}
f(x) := \begin{pmatrix}
\gamma_U{\cal R}_0^U \frac{ A_U}{ A_U+ A_W} A_U - (1+ L_W+ L_U) L_U \\
 L_U -\gamma_U  A_U\\
\gamma_W {\cal R}_0^W  A_W - (1+ L_W+ L_U) L_W \\
L_W -\gamma_W  A_W
\end{pmatrix},\qquad
B := \begin{pmatrix}
0 \\ 0 \\ 1 \\ 0
\end{pmatrix}\ .
\end{equation*}

We assume in all the sequel
\begin{equation}
\label{eq2}
{\cal R}_0^U > {\cal R}_0^W >1\ .
\end{equation}
Assumption \eqref{eq2}, which will be valid in the remainder of the paper, ensures the sustainability of each of the two isolated populations, with an even better sustainability for the non-infected one.
 See Theorem \ref{th3} below for more details.

\section{Analysis of the uncontrolled system}
\label{se2}

The uncontrolled system is obtained by taking zero input $u$, that is:

\begin{equation}
\label{eq6}
\dot x = f(x)
\end{equation}

\subsection{Well-posedness, positivity and boundedness}

One first shows the well-posedness of the Cauchy problem related to equation \eqref{eq6} for nonnegative initial conditions.

\begin{theo}
\label{th1}
For any initial value in $\Rset_+^4$, there exists a unique solution to the initial value problem associated to system \eqref{eq6}. 
The latter is defined on $[0,+\infty)$, depends continuously on the initial conditions and takes on values in $\Rset_+^4$.
Moreover, it is uniformly ultimately bounded.
\end{theo}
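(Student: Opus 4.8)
\medskip
\noindent\emph{Proof plan.} The plan is to treat the four assertions in the logical order well-posedness (local existence, uniqueness, continuous dependence), positivity, boundedness, and global existence, the last being obtained from the boundedness estimate together with invariance of the orthant. The only real delicacy concerns the recruitment term $\gamma_U{\cal R}_0^U \frac{A_U}{A_U+A_W}A_U$ in \eqref{eq10a}, which is a priori undefined on the plane $\{A_U=A_W=0\}$. First I would rewrite it as $\gamma_U{\cal R}_0^U\,h(A_U,A_W)$ with $h(A_U,A_W):=\frac{A_U^2}{A_U+A_W}$ and observe that $0\le h(A_U,A_W)\le A_U$ on $\Rset_+^2$, so that $h$ extends continuously to $\{A_U=A_W=0\}$ by the value $0$. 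Computing the partial derivatives on the open quadrant shows that $\partial_{A_U}h\in[0,1]$ and $\partial_{A_W}h\in[-1,0]$, so $h$ has bounded gradient on the convex set $\Rset_+^2$ and is therefore Lipschitz there; consequently the (extended) field $f$ is locally Lipschitz on $\Rset_+^4$. The classical Cauchy--Lipschitz theorem then provides, for each initial value in $\Rset_+^4$, a unique maximal solution, and continuous dependence on the initial condition on every compact subinterval follows from the usual Gronwall estimate.

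Next I would prove that $\Rset_+^4$ is forward invariant, which is simultaneously the positivity claim and what keeps the trajectory inside the domain on which $f$ is defined. This is a sub-tangentiality (Nagumo-type) check on the four faces of the orthant: on $\{L_U=0\}$ one has $\dot L_U=\gamma_U{\cal R}_0^U\,h(A_U,A_W)\ge 0$; on $\{A_U=0\}$, $\dot A_U=L_U\ge 0$; on $\{L_W=0\}$, $\dot L_W=\gamma_W{\cal R}_0^W A_W\ge 0$; and on $\{A_W=0\}$, $\dot A_W=L_W\ge 0$. Hence on each boundary face the vector field points into or along $\Rset_+^4$, so the solution issued from a nonnegative initial value stays nonnegative throughout its interval of existence.

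For boundedness I would exploit the fact that the coupling is dominated, $h(A_U,A_W)\le A_U$, to decouple the uninfected pair from the infected one. Using $L_U,L_W\ge 0$ one obtains $\dot L_U\le \gamma_U{\cal R}_0^U A_U-(1+L_U)L_U$ while $\dot A_U=L_U-\gamma_U A_U$ exactly. The comparison planar system $\dot l=\gamma_U{\cal R}_0^U a-(1+l)l,\ \dot a=l-\gamma_U a$ is cooperative (both off-diagonal Jacobian entries are positive), has a saddle at the origin (its determinant is $\gamma_U(1-{\cal R}_0^U)<0$ by \eqref{eq2}), and admits the unique, globally attracting positive equilibrium $(l^{*},a^{*})=({\cal R}_0^U-1,\ ({\cal R}_0^U-1)/\gamma_U)$. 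The Kamke--M\"uller comparison principle then gives $\limsup_{t}L_U(t)\le {\cal R}_0^U-1$ and $\limsup_{t}A_U(t)\le ({\cal R}_0^U-1)/\gamma_U$, and the verbatim argument applied to \eqref{eq10c}--\eqref{eq10d} bounds $(L_W,A_W)$ by $({\cal R}_0^W-1,\ ({\cal R}_0^W-1)/\gamma_W)$. Equivalently, one checks that a rectangle $[0,\ell_U]\times[0,a_U]\times[0,\ell_W]\times[0,a_W]$, with $\ell_\eta$ slightly above ${\cal R}_0^\eta-1$ and $a_\eta$ slightly above $\ell_\eta/\gamma_\eta$, is compact, forward invariant and absorbing, since the field points strictly inward on each of its outer faces, uniformly in the remaining nonnegative coordinates. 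This yields uniform ultimate boundedness; and since the maximal solution is thus bounded, the standard no-blow-up criterion forces its interval of definition to be all of $[0,+\infty)$.

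The main obstacle, in my view, is the non-smoothness of the recruitment term at $\{A_U=A_W=0\}$: the entire well-posedness argument hinges on verifying that the extension $h$ is genuinely (locally) Lipschitz across that plane, without which neither uniqueness nor continuous dependence would be automatic. Once this is settled, the remaining steps are routine, the only recurring point to watch being that the inward-pointing inequalities, both on the boundary faces of $\Rset_+^4$ and on the outer faces of the trapping box, hold uniformly with respect to the other (nonnegative) coordinates, which is precisely what the bound $h\le A_U$ and the signs $L_U,L_W,A_U,A_W\ge 0$ guarantee.
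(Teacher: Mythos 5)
Your proof is correct and follows the same overall skeleton as the paper's: continuous extension of the recruitment term across $\{A_U=A_W=0\}$ plus a local Lipschitz argument for well-posedness, a sign check of the vector field on the boundary faces for positivity, and a Kamke comparison with a cooperative planar system for ultimate boundedness, which in turn gives global existence. The one genuine difference is the choice of comparison system in the boundedness step: the paper aggregates the two populations, setting $L=L_U+L_W$, $A=A_U+A_W$ as in \eqref{eq800}, and dominates $(L,A)$ by the single auxiliary system \eqref{eq30} built from the worst-case parameters $\gamma=\min\{\gamma_U;\gamma_W\}$ and ${\cal R}_0=\max\{\gamma_U{\cal R}_0^U;\gamma_W{\cal R}_0^W\}/\gamma$, whereas you exploit $A_U^2/(A_U+A_W)\le A_U$ and the signs of the cross terms to decouple the pairs $(L_U,A_U)$ and $(L_W,A_W)$, comparing each to its own planar cooperative system. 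Both routes rest on the same two ingredients --- Kamke's theorem and global attractivity of the positive equilibrium of a planar cooperative system having a saddle at the origin --- but yours yields the sharper, population-specific ultimate bounds ${\cal R}_0^\eta-1$ and $({\cal R}_0^\eta-1)/\gamma_\eta$, $\eta=U,W$, while the paper's aggregation gets by with a single auxiliary system at the price of the cruder min/max constants. Two side remarks: your explicit gradient bounds for $h(A_U,A_W)=A_U^2/(A_U+A_W)$ genuinely substantiate the local Lipschitz property near $\{A_U=A_W=0\}$, which the paper asserts as \emph{clearly} true, so this is a welcome strengthening; on the other hand, your parenthetical \emph{equivalently} about the trapping rectangle is slightly loose, since strict inward-pointing on its outer faces gives forward invariance but not by itself the absorbing property --- the latter still requires the comparison/attractivity argument you had already given, so nothing is lost.
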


\begin{rema}
The previous result shows that system \eqref{eq6} is {\em positive.}
Therefore, when talking about ``trajectories", we will always mean trajectories with initial values in $\Rset_+^4$.
The same shortcut will be used for all positive systems considered later.
\end{rema}

Let us introduce the following definition of an {\em order induced by a cone}, that will be instrumental in proving Theorem \ref{th1}.

\begin{defi}
\label{de1}
Let $n\in\Nset$ and $\cK\subset\Rset^n$ be a closed convex cone with nonempty interior.
We use $\geq_\cK$ to denote the {\em order induced by  $\cK$,} that is: $\forall x,x'\in\Rset^n$,
\begin{subequations}
\begin{equation*}
x\geq_\cK x'\quad \Leftrightarrow\quad  x-x'\in \cK\ .
\end{equation*}
Similarly, one defines
\begin{gather*}
x>_\cK x' \Leftrightarrow  x-x'\in \cK \text{ and } x\neq x'\\
x\gg_\cK x' \Leftrightarrow  x-x'\in \inte\ \cK
\end{gather*}
\end{subequations}
\end{defi}
As usual, one will write $x\leq_\cK x'$ to mean $x'\geq_\cK x$.

\begin{proof}[Proof of Theorem \ref{th1}]
Function $f$ in \eqref{eq6} is clearly well-defined and continuous in $\Rset_+^4$, except in points where $A_U=A_W=0$.
Due to the fact that $0\leq \frac{A_U}{A_U+A_W}\leq 1$, the quantity $\frac{A_U}{A_U+A_W}A_U$ tends towards zero when one approaches such points, and $f(x)$ can thus be defined by continuity when $A_U=A_W=0$.
 In addition, the right-hand side is clearly locally Lipschitz in $\Rset_+^4$, and classical result ensures the {\em local} well-posedness of the initial value problem, as long as the trajectory does not leave this set.
 
 The invariance property of the set $\Rset_+^4$ is verified due to the fact that
 \begin{equation*}
 \forall x\in\Rset_+^4,\,\forall i\in\{1,2,3,4\}: \qquad x_i =0\ \Rightarrow f_i(x)\geq 0\ .
 \end{equation*}
 
Let us now show that, for any initial condition in $\Rset_+^4,$ the associated trajectory remains bounded for all $t\geq 0.$
With this aim, let us define
\begin{subequations}
\label{eq800}
\begin{gather}
L := L_U+L_W,\qquad A := A_U+A_W,\\
\gamma := \min\{\gamma_U;\gamma_W\}>0,\qquad
{\cal R}_0 := \frac{\max\{\gamma_U{\cal R}_0^U;\gamma_W{\cal R}_0^W\}}{\gamma}\ .
\end{gather}
\end{subequations}
Notice that, in view of hypothesis \eqref{eq2},
\begin{equation}
\label{eq77}
{\cal R}_0 >1\ .
\end{equation}

It turns out that
\begin{subequations}
\begin{eqnarray*}
\dot L
& = &
\nonumber
\gamma_U{\cal R}_0^U\frac{A_U}{A_U+A_W}A_U+\gamma_W{\cal R}_0^WA_W
-(1+L_W+L_U)(L_U+L_W)\\
& \leq &
\nonumber
\left(
\gamma_U{\cal R}_0^UA_U+\gamma_W{\cal R}_0^WA_W
\right) - (1+L)L\\
& \leq &
\gamma{\cal R}_0A-(1+L)L
\end{eqnarray*}
and
\begin{equation*}
\dot A \leq L - \gamma A
\end{equation*}
\end{subequations}

Now, the auxiliary system
\begin{equation}
\label{eq30}
\dot L' = \gamma{\cal R}_0A'-(1+L')L',\qquad \dot A' = L' - \gamma A'
\end{equation}
is evidently {\em cooperative} (see \cite{Hirsch:1988aa}) for the canonic order induced by the cone $\Rset_+^2$.
One may thus use Kamke's theorem, see e.g.\ \cite[Theorem 10, p.\ 29]{Coppel:1965aa} or \cite{Smith:1995aa}, and compare the solutions of \eqref{eq6} (with $L$ and $A$ defined by \eqref{eq800}) and \eqref{eq30}.
One deduces 
$$
L(t)\leq L'(t),\quad A(t)\leq A'(t),\qquad \text{for all } t\geq 0,
$$
whenever the solutions are considered with the same initial conditions.

It may be shown without difficulty that system \eqref{eq30} possesses exactly two equilibria, namely
\begin{equation*}
x_*:=(L_*,A_*):=(0,0)\quad\text{ and }\quad x^*:=(L^*,A^*):=\left({\cal R}_0-1,\frac{1}{\gamma}{\cal R}_0\right)\ .
\end{equation*}
Due to \eqref{eq77}, linearization around each point shows that $x_*$ is locally unstable, while $x^*$ is locally asymptotically stable (LAS).
On the other hand, notice that $x_*\leq_{\Rset_+^2} x^*$. 
Using the local stability information, application of \cite[Theorem 10.3]{Hirsch:1988aa} then shows that the stability of $x^*$ is {\em global} in the topological interior of $\Rset_+^2$, and that this point is in fact attractive for any initial point distinct from $ x_*=(0,0)$.
In any case, all solutions of \eqref{eq30} converge to the order interval
\begin{equation*}
\llbracket x_*;x^*\rrbracket_{\Rset_+^2}
:= \left\{
x'\in\Rset_+^2\ :\ L_*\leq L' \leq L^*,\ A_*\leq A' \leq A^*
\right\}\ .
\end{equation*}

Coming back to the solutions of \eqref{eq6}, the comparison method mentioned above now allows to deduce the same property for $L,A$ defined in \eqref{eq800}.
Using finally the fact, proved in Theorem \ref{th1}, that the trajectories remain in $\Rset_+^4$, the same bounds apply componentwise to $L_U, L_W$ and $A_U, A_W$ respectively.
In particular, all solutions of \eqref{eq6} are uniformly ultimately bounded, and this yields global existence of  solutions, and hence the proof of Theorem \ref{th1}.
\end{proof}

\subsection{Monotonicity}

One shows here that the uncontrolled system \eqref{eq6} is {\em monotone.}
For sake of completeness, we recall here the definition of {\em monotone} and {\em strongly order-preserving} semiflows defined on a topological space $X$ partially ordered by an order relation $\leq_\cK$ generated by a cone $\cK$ (see \cite{Smith:1995aa}).

\begin{defi}
\label{defi2}
The semiflow $\Phi$ is called {\em monotone} if
\begin{equation*}
\Phi_t(x) \leq_\cK \Phi_t(x') \qquad \text{ whenever } x\leq_\cK x' \text{ and } t\geq 0\ .
\end{equation*}
The semiflow $\Phi$ is called {\em strongly order-preserving} if $\Phi$ is monotone and, whenever $x<_\cK x'$, there exist open subsets $\Omega,\Omega'$ of $X$ with $x\in \Omega$, $x'\in \Omega'$, and $t>0$ such that
\begin{equation*}
\Phi_t(\Omega) \leq_\cK \Phi_t(\Omega')\ ,
\end{equation*}
this meaning $z\leq_\cK z',$ for all $z\in\Phi_t(\Omega)$, $z' \in \Phi_t(\Omega')$.
The semiflow $\Phi$ is called {\em strongly monotone} if $\Phi$ is monotone and
\begin{equation*}
\Phi_t(x) \ll_\cK \Phi_t(x') \qquad \text{ whenever } x<_\cK x' \text{ and } t> 0\ .
\end{equation*}
A dynamical system is said to have one of the properties above if its associated semiflow does.
\end{defi}

We now examine system \eqref{eq6} at the light of these properties.

\begin{theo}
\label{th55}
System \eqref{eq6} is strongly order-preserving in $\Rset_+^4$ for the order induced by the cone 
\begin{equation}
\label{coneK}
\cK:=\Rset_-\times\Rset_-\times\Rset_+\times\Rset_+,
\end{equation}
 (that is such that: $x\geq_\cK x' \Leftrightarrow x_i\leq x'_i,\ i=1,2 \text{ and } x_i\geq x'_i,\ i=3,4$).
\end{theo}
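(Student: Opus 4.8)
The plan is to establish the two ingredients of strong order-preservation separately: plain monotonicity with respect to $\le_\cK$, obtained from the Kamke--M\"uller sign criterion, and then the ``strong'' refinement, obtained from irreducibility of the linearization in the interior of $\Rset_+^4$ together with a separate treatment of the invariant boundary faces.

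For monotonicity I would use that $\cK$ is an orthant cone with signature $\sigma=(-,-,+,+)$ (i.e.\ $\sigma_1=\sigma_2=-1$, $\sigma_3=\sigma_4=+1$), so that by Kamke's theorem --- the same tool already invoked for the auxiliary system \eqref{eq30} --- the semiflow is monotone for $\le_\cK$ as soon as the off-diagonal entries of the Jacobian $Df$ obey the sign pattern $\sigma_i\sigma_j\,\partial f_i/\partial x_j\ge 0$ for $i\neq j$, on the region where $f$ is $C^1$. I would then simply tabulate these partial derivatives. The only nonzero off-diagonal entries are $\partial f_1/\partial A_U=\gamma_U\cR_0^U\frac{A_U(A_U+2A_W)}{(A_U+A_W)^2}\ge0$, $\partial f_1/\partial L_W=-L_U\le0$, $\partial f_1/\partial A_W=-\gamma_U\cR_0^U\frac{A_U^2}{(A_U+A_W)^2}\le0$, $\partial f_2/\partial L_U=1$, $\partial f_3/\partial L_U=-L_W\le0$, $\partial f_3/\partial A_W=\gamma_W\cR_0^W>0$ and $\partial f_4/\partial L_W=1$; checking each against its prescribed sign $\sigma_i\sigma_j$ confirms the criterion. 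Since $f$ is only Lipschitz (not $C^1$) on $\{A_U=A_W=0\}$, I would either verify the order form of the quasimonotonicity condition directly there, or pass to the limit by continuity, noting that this set sits inside an invariant face.

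For the strong refinement I would read off from the list above the interaction digraph on $\{1,2,3,4\}$ whose arcs $j\to i$ mark the nonzero $\partial f_i/\partial x_j$. On $\inte\Rset_+^4$ all the arcs $2\to1,\,3\to1,\,4\to1,\,1\to2,\,1\to3,\,4\to3,\,3\to4$ are present, and this digraph is strongly connected, so $Df(x)$ is irreducible there. Because the interior is forward invariant --- each component has strictly positive derivative whenever it vanishes while its ``feeding'' component is positive, so a trajectory started inside cannot reach the boundary --- the classical result of \cite{Hirsch:1988aa} (see also \cite{Smith:1995aa}) yields that the restricted semiflow is \emph{strongly} monotone: ordered interior pairs satisfy $\Phi_t(x)\ll_\cK\Phi_t(x')$ for $t>0$, which gives the strong order-preserving inequality for such pairs at once.

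The delicate point, and the main obstacle, is the boundary. The faces $F_U:=\{L_U=A_U=0\}$ and $F_W:=\{L_W=A_W=0\}$ are invariant (one population is extinct and stays so), and on each of them two components of $Df$ decouple, so $Df$ is reducible and strong monotonicity genuinely degenerates --- this is precisely why one must argue for the weaker strong order-preserving property rather than for strong monotonicity. My plan is to stratify $\Rset_+^4$ into $\inte\Rset_+^4$, the relative interiors of $F_U$ and $F_W$, and the origin; to note that on each face the reduced two-dimensional system is itself cooperative and irreducible, hence strongly monotone within the face; and that every boundary point off $F_U\cup F_W$ enters the interior in arbitrarily short time. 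One then assembles the required neighborhoods $\Omega,\Omega'$ for an ordered pair $x<_\cK x'$ from the within-stratum strong separation, using monotone dependence and continuity of the flow. I expect the reconciliation of the components that remain pinned on an invariant face with the strong order-preserving inequality demanded for entire neighborhoods to be the technically hardest step, and the one where the precise strong-order-preservation machinery of \cite{Smith:1995aa} must be invoked rather than mere strong monotonicity.
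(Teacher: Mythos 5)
Your proposal is correct and follows essentially the same route as the paper's own proof: the Kamke--M\"uller sign check on the off-diagonal entries of $\nabla f$ (the paper's matrix \eqref{eq50}) gives $\cK$-monotonicity, irreducibility of the Jacobian away from the invariant faces $\{L_U=A_U=0\}$ and $\{L_W=A_W=0\}$ gives strong monotonicity there, the reduced two-dimensional cooperative systems on those faces are treated separately for the restricted order, and the pieces are then assembled into the strong order-preserving property. The paper packages your boundary stratification as its Lemma \ref{le7} (every trajectory either remains at $x_{0,0}$, remains on one of the two faces, or has $A_U(t),A_W(t)>0$ for all $t>0$), and its final assembly step is left exactly as informal as the step you flag as the technically hardest one.
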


System \eqref{eq6} is therefore monotone in $\Rset_+^4$, but not strongly monotone, due to the fact that the trajectories departing inside the sets $\{x\in\Rset_+^4\ :\ L_U =0, A_U=0\}$ and $\{x\in\Rset_+^4\ :\ L_W =0, A_W=0\}$ remain in these sets and, consequently, do not verify strict ordering property for the two null components.

Before proving Theorem \ref{th55}, we summarize in the following result the behavior of the trajectories in relation with some parts of the boundaries.
\begin{lemm}
\label{le7}
Let $x_0\in\Rset_+^4$.
Then exactly one of the four following properties is verified by the trajectories departing from $x_0$ at $t=0$.
\begin{itemize}
\item[$\star$]
$x\equiv x_{0,0}$ (that is, $x(t)=x_{0,0}$, $\forall t\geq 0$).
\item[$\star$]
$A_W\equiv 0$ and $A_U(t)>0$, $\forall t>0$.
\item[$\star$]
$A_U\equiv 0$ and $A_W(t)>0$, $\forall t>0$.
\item[$\star$]
$A_W(t)>0$ and $A_U(t)>0$, $\forall t>0$.
\end{itemize}
\end{lemm}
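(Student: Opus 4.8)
The plan is to decouple the statement into two independent one-dimensional-in-spirit assertions, one for each subpopulation, and then recombine them by an elementary case analysis. The crucial structural observation is that each of the pairs $(L_U,A_U)$ and $(L_W,A_W)$ evolves under a subsystem whose ``extinction set'' $\{L=0,\ A=0\}$ is invariant, while off this set the adult compartment becomes strictly positive for all positive times. Accordingly, I would first prove the following dichotomy for the $U$-subsystem: either $(L_U(0),A_U(0))=(0,0)$, in which case $(L_U,A_U)\equiv(0,0)$ and hence $A_U\equiv 0$; or $(L_U(0),A_U(0))\neq(0,0)$, in which case $A_U(t)>0$ for every $t>0$. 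The analogous dichotomy for the $W$-subsystem follows verbatim (indeed more easily, since its recruitment term $\gamma_W\cR_0^W A_W$ is linear and carries no singularity).

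The heart of the argument is a set of differential inequalities that discard the recruitment terms altogether, thereby sidestepping the only delicate feature of the vector field, namely the singularity of $\tfrac{A_U}{A_U+A_W}A_U$ at $A_U=A_W=0$. Since all components remain nonnegative by Theorem \ref{th1} and the recruitment terms in \eqref{eq10a} and \eqref{eq10c} are nonnegative, dropping them yields $\dot A_U \geq -\gamma_U A_U$ and $\dot L_U \geq -(1+L_W+L_U)L_U$, and likewise for the $W$ components. On any finite interval $[0,T]$ the trajectory is bounded by Theorem \ref{th1}, so $1+L_W+L_U\leq C_T$ there, and a standard comparison (Gr\"onwall) estimate gives $A_U(t)\geq A_U(0)e^{-\gamma_U t}$ and $L_U(t)\geq L_U(0)e^{-C_T t}$. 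Hence $A_U(0)>0$ forces $A_U>0$ on all of $[0,\infty)$, and $L_U(0)>0$ forces $L_U>0$ on all of $[0,\infty)$. For the remaining boundary case $A_U(0)=0,\ L_U(0)>0$, I would integrate $\tfrac{d}{dt}\bigl(e^{\gamma_U t}A_U\bigr)=e^{\gamma_U t}L_U>0$ to conclude $A_U(t)>0$ for every $t>0$. Finally, $A_U\equiv 0$ implies $\dot A_U\equiv 0$ and therefore $L_U\equiv 0$, so the conditions $A_U\equiv 0$ and $(L_U(0),A_U(0))=(0,0)$ are equivalent; this closes the dichotomy.

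With both dichotomies established, I would finish by a two-by-two case analysis driven by the independent Boolean conditions $P_U:\ (L_U(0),A_U(0))=(0,0)$ and $P_W:\ (L_W(0),A_W(0))=(0,0)$. The four combinations give precisely the four bullets: $P_U\wedge P_W$ keeps the whole state at the trivial equilibrium $x_{0,0}=(0,0,0,0)$; $P_W\wedge\neg P_U$ gives $A_W\equiv 0$ with $A_U(t)>0$ for $t>0$; $P_U\wedge\neg P_W$ gives the symmetric statement; and $\neg P_U\wedge\neg P_W$ gives $A_U(t)>0$ and $A_W(t)>0$ for $t>0$. Since every $x_0$ fixes the truth values of $P_U$ and $P_W$ uniquely, it falls into exactly one of these four mutually exclusive and exhaustive alternatives, which is exactly the claim.

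As for the main obstacle: the argument is essentially routine once cast in terms of differential inequalities, and the only points deserving care are the passage from the boundary (adult compartment starting at zero while the larval one is positive) to strict positivity thereafter, handled by the integrating-factor computation, and the legitimacy of discarding the recruitment terms despite the singularity at $A_U=A_W=0$. I do not expect any genuine difficulty beyond this bookkeeping.
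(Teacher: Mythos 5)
Your proposal is correct and takes essentially the same approach as the paper: both arguments hinge on the linear structure of \eqref{eq10b} and \eqref{eq10d} together with positivity of the components, showing that $A_\eta\equiv 0$ holds precisely when $(L_\eta(0),A_\eta(0))=(0,0)$ and that otherwise $A_\eta(t)>0$ for all $t>0$, the four bullets then following by combining the two dichotomies. Your Gr\"onwall and integrating-factor estimates simply make explicit the details that the paper compresses into ``Clearly, one sees from \eqref{eq10b}\dots''.
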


\begin{proof}[Proof of Lemma \ref{le7}]
Clearly, one sees from \eqref{eq10b} (resp. \eqref{eq10d}) that $A_U\equiv 0$ (resp.\ $A_W\equiv 0$) if and only if $A_U(0)=0$ and $L_U\equiv 0$ (resp.\ $A_W(0)=0$ and $L_W\equiv 0$).
Therefore, if $(A_U(0),L_U(0))\neq (0,0)$ (resp.\ $(A_W(0),L_W(0))\neq (0,0)$), then $A_U(t)>0$ (resp.\ $A_W(t)>0$) for all $t>0$.
This proves Lemma \ref{le7}.
\end{proof}

\begin{proof}[Proof of Theorem \ref{th55}]

We now introduce the gradient of $f$.
At each point $x= (L_U,A_U,L_W,A_W)\in\Rset_+^4$ such that $A_U+A_W>0$, $\nabla f(x)$ is equal to
\begin{equation}
\label{eq50}
\begin{pmatrix}
-1-2L_U-L_W & \gamma_U{\cal R}_0^U \left(
1- \frac{A_W^2}{(A_U+A_W)^2}
\right) & -L_U & -\gamma_U{\cal R}_0^U \frac{A_U^2}{(A_U+A_W)^2}\\
1 & -\gamma_U & 0 & 0\\
-L_W & 0 & -1-L_U-2L_W & \gamma_W{\cal R}_0^W\\
0 & 0 & 1 & -\gamma_W
\end{pmatrix}
\end{equation}
Notice that, as a corollary of Lemma \ref{le7}, either $x\equiv x_{0,0}$, or $A_U(t)+A_W(t)>0$ for all $t>0$.
Therefore, the gradient can be computed at any point of a trajectory, except if the latter is reduced to $x_{0,0}$.

For any $x\in\Rset_+^4$, one verifies easily that
\begin{subequations}
\begin{gather*}
\forall (i,j)\in\{1,2\}\times\{3,4\},\qquad
\frac{\partial f_i}{\partial x_j}(x) \leq 0,\
\frac{\partial f_j}{\partial x_i}(x) \leq 0,\\
\forall (i,j)\in\{1,2\}^2\cup\{3,4\}^2,\ i\neq j,\qquad
\frac{\partial f_i}{\partial x_j}(x) \geq 0\ .
\end{gather*}
\end{subequations}
Hence the system is monotone.

Moreover, except when $A_U=0$ or $A_W=0$, the Jacobian matrix in \eqref{eq50} is irreducible, and the semiflow related to system \eqref{eq6} is therefore strongly monotone therein.
On the other hand, trajectories confined to one of the sets $\{x\in\Rset_+^4\ :\ L_U =0, A_U=0\}$ and $\{x\in\Rset_+^4\ :\ L_W =0, A_W=0\}$, also verify strong monotonicity, {\em for the order relation restricted to the two non-identically zero components}.
These two remarks, together with Lemma \ref{le7}, show that overall the strongly order-preserving property is verified.
 This completes the demonstration of Theorem \ref{th55}.
\end{proof}

\subsection{Equilibrium points and stability}

The next result describes the situation of the equilibrium points and their stability.
Recall that the cone $\cK$ used to order the state space has been defined in \eqref{coneK} (in  Theorem \ref{th55}).

\begin{theo}
\label{th3}
System \eqref{eq6} possesses four equilibrium points, denoted $x_{0,0}$, $x_{U,0}$, $x_{0,W}$ and $x_{U,W}$ and corresponding respectively to zero population, disease-free state, complete infestation, and coexistence.
Moreover, the latter fulfill the following inequalities:
\begin{equation}
\label{eq776}
x_{U,0} \ll_\cK x_{U,W} \ll_\cK x_{0,W}\qquad\text{ and }\qquad
x_{U,0} \ll_\cK x_{0,0} \ll_\cK x_{0,W}\ .
\end{equation}
Last, the equilibrium points $x_{U,0}$ and $x_{0,W}$ are locally asymptotically stable (LAS), while the two other ones are unstable.
\end{theo}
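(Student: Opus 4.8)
The plan is to treat the three assertions of Theorem~\ref{th3} in turn: locate the four equilibria explicitly, read off the order relations \eqref{eq776} from their coordinates, and decide stability by linearisation, the coexistence point being the delicate case. First I would set the right-hand side of \eqref{eq6} to zero. The two adult equations force $L_U = \gamma_U A_U$ and $L_W = \gamma_W A_W$, so an equilibrium is fixed by $(A_U,A_W)$. Writing $N := 1 + L_U + L_W$, the larval equations reduce (after dividing by $\gamma_U A_U$, resp.\ $\gamma_W A_W$, where nonzero) to ${\cal R}_0^U\frac{A_U}{A_U+A_W} = N$ and ${\cal R}_0^W = N$. A case split on whether $A_U,A_W$ vanish yields exactly four solutions: the trivial $x_{0,0}=(0,0,0,0)$; the boundary points $x_{U,0}$ (with $A_W=0$, $L_U={\cal R}_0^U-1$, $A_U=({\cal R}_0^U-1)/\gamma_U$) and $x_{0,W}$ (with $A_U=0$, $L_W={\cal R}_0^W-1$, $A_W=({\cal R}_0^W-1)/\gamma_W$); and a unique interior point $x_{U,W}$ determined by $N={\cal R}_0^W$ and $\frac{A_U}{A_U+A_W}={\cal R}_0^W/{\cal R}_0^U$. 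That this ratio lies in $(0,1)$ is exactly what \eqref{eq2} provides, securing both existence and uniqueness of coexistence; in particular $L_U+L_W={\cal R}_0^W-1$ there.

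Second, the order relations \eqref{eq776} follow by comparing coordinates against $\cK$. Using $L_U+L_W={\cal R}_0^W-1$ at $x_{U,W}$ with ${\cal R}_0^U>{\cal R}_0^W>1$ gives $0<L_U<{\cal R}_0^W-1<{\cal R}_0^U-1$ and $0<L_W<{\cal R}_0^W-1$, from which the strict chain $x_{U,0}\ll_\cK x_{U,W}\ll_\cK x_{0,W}$ is immediate (all four components strictly ordered), and the same coordinatewise comparison of $x_{0,0}$ with the two stable equilibria settles the second chain.

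Third comes the stability analysis via the Jacobian \eqref{eq50}. At $x_{U,0}$ one has $A_W=0$, whence $\frac{A_U^2}{(A_U+A_W)^2}=1$ and $\frac{A_W^2}{(A_U+A_W)^2}=0$; symmetrically $A_U=0$ at $x_{0,W}$. In both cases the block coupling the $U$- and $W$-variables degenerates and $\nabla f$ becomes block triangular, so the spectrum is the union of those of two $2\times 2$ diagonal blocks. A direct computation gives, at $x_{U,0}$, blocks of negative trace and determinants $\gamma_U({\cal R}_0^U-1)>0$ and $\gamma_W({\cal R}_0^U-{\cal R}_0^W)>0$; and at $x_{0,W}$, a block with eigenvalues $-{\cal R}_0^W,-\gamma_U$ and one of negative trace with determinant $\gamma_W({\cal R}_0^W-1)>0$. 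All are Hurwitz by \eqref{eq2}, so $x_{U,0}$ and $x_{0,W}$ are LAS. At $x_{0,0}$ formula \eqref{eq50} does not apply, so I would instead use that $\{L_U=A_U=0\}$ is invariant and that the induced planar $(L_W,A_W)$-system has, at the origin, a Jacobian of determinant $\gamma_W(1-{\cal R}_0^W)<0$, i.e.\ a saddle; hence $x_{0,0}$ is unstable.

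The main obstacle is the instability of $x_{U,W}$, where $\nabla f$ is full and irreducible and no triangular reduction helps. I would settle it through the sign of $\det\nabla f(x_{U,W})$. The conjugation by $\diag(-1,-1,1,1)$ that makes the system cooperative turns $\nabla f$ into an irreducible Metzler matrix, whose spectral abscissa is attained at a real Perron--Frobenius eigenvalue, so its sign decides stability. Using $N:=1+L_U+L_W={\cal R}_0^W$ to write the diagonal entries as $-{\cal R}_0^W-L_U$ and $-{\cal R}_0^W-L_W$, a cofactor expansion---after the mixed $L_U L_W$ terms cancel---should collapse to
\begin{equation*}
\det\nabla f(x_{U,W}) = -\,{\cal R}_0^W\,\gamma_U\,L_W\left(\gamma_U\frac{{\cal R}_0^W}{{\cal R}_0^U} + \gamma_W\frac{{\cal R}_0^U-{\cal R}_0^W}{{\cal R}_0^U}\right),
\end{equation*}
which is strictly negative by \eqref{eq2}. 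Since the non-real eigenvalues of a real matrix occur in conjugate pairs contributing positive factors, $\det\nabla f=\prod_i\lambda_i<0$ forces the product of the (evenly many) real eigenvalues to be negative, hence at least one strictly positive real eigenvalue. Thus $x_{U,W}$ is unstable, completing the proof; the determinant evaluation is the only genuinely technical step.
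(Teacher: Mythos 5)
Your proposal reaches the same conclusions and, for three of its four parts (explicit computation of the equilibria, the coordinatewise ordering, and the block-triangular linearisations at $x_{0,0}$, $x_{U,0}$, $x_{0,W}$), it follows essentially the paper's own path; all the formulas you state there are correct. The genuine divergence is at the coexistence equilibrium $x_{U,W}$. The paper explicitly mentions the determinant route you take but sets it aside (``this argument yields lengthy computations'') and instead uses the monotone-systems machinery it has already built: since the flow is strongly order-preserving for $\leq_\cK$ (Theorem \ref{th55}) and $x_{U,0}\ll_\cK x_{U,W}\ll_\cK x_{0,W}$ with both outer points LAS, \cite[Theorem 2.2]{Smith:1995aa} forbids the sandwiched equilibrium from being stable. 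Your computation, by contrast, is self-contained linear algebra, and it does go through: using the equilibrium identity $1+L_U^{**}+L_W^{**}=\mathcal{R}_0^W$ and a Schur complement on the lower-right $2\times 2$ block of \eqref{eq50} (whose determinant is $\gamma_W L_W^{**}$), one obtains exactly your formula
\begin{equation*}
\det\nabla f(x_{U,W}) \;=\; -\,\mathcal{R}_0^W\,\gamma_U\,L_W^{**}\left(\gamma_U\frac{\mathcal{R}_0^W}{\mathcal{R}_0^U}+\gamma_W\frac{\mathcal{R}_0^U-\mathcal{R}_0^W}{\mathcal{R}_0^U}\right)<0\ ,
\end{equation*}
and your parity argument (even dimension, non-real eigenvalues in conjugate pairs, nonzero determinant) correctly yields a strictly positive real eigenvalue, hence instability; the Metzler/Perron--Frobenius remark is true but not needed. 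What each approach buys: yours requires no external theorem and in fact no monotonicity at all --- the ordering \eqref{eq776} is then needed only for the statement, not for the proof of instability --- while the paper's avoids all computation and recycles structure (Theorem \ref{th55} and the ordering) that it needs anyway for Theorem \ref{th5} and for the analysis of the controlled system.

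One caveat, which you share with the paper: the second chain of \eqref{eq776} does not hold with $\ll_\cK$ as defined in Definition \ref{de1}, since $x_{0,0}-x_{U,0}$ has third and fourth components equal to zero and thus does not lie in $\inte\,\cK$ (similarly for $x_{0,W}-x_{0,0}$, whose first two components vanish); only $x_{U,0}<_\cK x_{0,0}<_\cK x_{0,W}$ is true. So your sentence claiming the coordinatewise comparison ``settles the second chain'' glosses over exactly the same point the paper does. This affects neither your determinant argument nor the paper's invocation of Smith's theorem, which uses only the first chain.
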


The proof of Theorem \ref{th3} is decomposed in the following sections.

\subsubsection{Proof of Theorem \ref{th3} -- Computation and ordering of the equilibrium points}

One here computes the equilibrium points.
The latter verify
\begin{subequations}
\label{eq4}
\begin{gather}
\label{eq4a}
\gamma_U{\cal R}_0^U \frac{ A_U}{ A_U+ A_W} A_U - (1+ L_W+ L_U) L_U = 0\\
\label{eq4b}
\gamma_W {\cal R}_0^W  A_W - (1+ L_W+ L_U) L_W =0 \\
\label{eq4c}
L_U = \gamma_U  A_U,\qquad L_W =\gamma_W  A_W
\end{gather}
\end{subequations}

The point $x_{0,0} := (0,0,0,0)$ is clearly an equilibrium.
Let us look for an equilibrium $x_{U,0} := (L_U^*,A_U^*,0,0)$.
The quantities $L_U^*,A_U^*$ then have to verify
\begin{equation}
\label{eq11}
\gamma_U{\cal R}_0^U A_U^* - (1 + L_U^*)L_U^* =0,\qquad
 L_U^* = \gamma_U A_U^*\ .
\end{equation}
Dividing by $L_U^*\neq 0$ yields $1+ L_U^* = {\cal R}_0^U$.
One thus gets the unique solution of this form verifying
\begin{equation*}
L_U^* = {\cal R}_0^U-1,\qquad
A_U^* = \frac{{\cal R}_0^U-1}{\gamma_U}\ ,
\end{equation*}
which is positive due to hypothesis \eqref{eq2}.

Similarly, one now looks for an equilibrium defined as $x_{0,W} := (0,0,L_W^*,A_W^*)$.
The values of $L_W^*,A_W^*$ must verify
\begin{equation*}
\gamma_W {\cal R}_0^W A_W^* - (1+ L_W^*) L_W^*
 =0,\qquad
L_W^* = \gamma_W A_W^*\ .
\end{equation*}
This is identical to \eqref{eq11}, and as for the $x_{U,0}$ case, one gets a unique, positive, solution, namely
\begin{equation}
\label{eq21}
L_W^* = {\cal R}_0^W-1,\qquad
A_W^* = \frac{{\cal R}_0^W-1}{\gamma_W}\ .
\end{equation}

We show now that system \eqref{eq6} also admits a unique coexistence equilibrium with positive components $x_{U,W}= (L_U^{**}, A_U^{**}, L_W^{**}, A_W^{**})$.
Coming back to \eqref{eq4} and expressing the value of the factor common to the first and second identity leads to
\begin{eqnarray*}
1+L_U^{**}+L_W^{**}
& = &
\gamma_W{\cal R}_0^W\frac{A_W^{**}}{L_W^{**}}
= {\cal R}_0^W\\
& = &
\gamma_U{\cal R}_0^U\frac{A_U^{**}}{A_U^{**}+A_W^{**}}\frac{A_U^{**}}{L_U^{**}}
= {\cal R}_0^U\frac{A_U^{**}}{A_U^{**}+A_W^{**}}
\end{eqnarray*}
One thus deduces
\begin{subequations}
\begin{equation*}
\frac{A_U^{**}}{A_U^{**}+A_W^{**}} = \frac{{\cal R}_0^W}{{\cal R}_0^U}\ ,
\end{equation*}
and one can express all three remaining unknowns in function of $A_W^{**}$:
\begin{equation*}
L_W^{**} = \gamma_W A_W^{**},\quad
A_U^{**} = \frac{{\cal R}_0^W}{{\cal R}_0^U-{\cal R}_0^W}  A_W^{**},\quad
L_U^{**} = \gamma_U A_U^{**} = \gamma_U\frac{{\cal R}_0^W}{{\cal R}_0^U-{\cal R}_0^W}  A_W^{**}\ .
\end{equation*}
\end{subequations}
Using the value of $L_U^{**}$ and $L_W^{**}$ now yields the relation 
\begin{equation*}
{\cal R}_0^W -1
= L_U^{**}+L_W^{**}
= \gamma_W\left(
1+ \frac{\gamma_U}{\gamma_W}\frac{{\cal R}_0^W}{{\cal R}_0^U-{\cal R}_0^W}
\right)
A_W^{**}\ ,
\end{equation*}
which has a unique, positive, solution when \eqref{eq2} holds.
Setting for sake of simplicity
\begin{subequations}
\begin{equation*}
\delta := \frac{\gamma_U}{\gamma_W}\frac{{\cal R}_0^W}{{\cal R}_0^U-{\cal R}_0^W}\ ,
\end{equation*}
the fourth equilibrium is finally given by
\begin{gather*}
L_U^{**} = \frac{\delta}{1+\delta}({\cal R}_0^W-1),\qquad
A_U^{**} = \frac{\delta}{(1+\delta)\gamma_U}({\cal R}_0^W-1)\\
L_W^{**} = \frac{1}{1+\delta}({\cal R}_0^W-1),\qquad
A_W^{**} = \frac{1}{(1+\delta)\gamma_W}({\cal R}_0^W-1)
\end{gather*}
\end{subequations}
We have so far exhibited all the equilibrium points.

Notice that the last equilibrium can be expressed alternatively by use of the values of the equilibrium $x_{0,W}$:
\begin{subequations}
\label{eq770}
\begin{gather}
\label{eq770a}
L_U^{**} = \frac{\delta}{1+\delta}L_W^*,\qquad
A_U^{**} = \frac{\delta}{1+\delta}\frac{\gamma_W}{\gamma_U}A_W^*\\
\label{eq770b}
L_W^{**} = \frac{1}{1+\delta}L_W^*,\qquad
A_W^{**} = \frac{1}{1+\delta}A_W^*
\end{gather}
\end{subequations}
and this provides straightforward comparison result:
\begin{subequations}
\label{eq771}
\begin{equation}
\label{eq771a}
L_U^{**} < L_W^* <L_U^*
\qquad \text{ and }
\qquad
L_W^{**} < L_W^* <L_U^*
\end{equation}
and thus $A_\eta^{**} = \gamma_\eta L_\eta^{**}  < \gamma_\eta L_\eta^* = A_\eta^*$, for $\eta \in \{U,W\}$ and, therefore,
 \begin{equation}
\label{eq771b}
A_U^{**} <A_U^*\qquad\text{ and }\qquad A_W^{**} < A_W^*\ ,
\end{equation}
\end{subequations}
the second inequality being directly deduced from \eqref{eq770b}.
The relations \eqref{eq771} allow to establish the inequalities \eqref{eq776}.
 
\subsubsection{Proof of Theorem \ref{th3} -- Local stability analysis}
\label{se232}

The local stability analysis is conducted through the Jacobian matrices.
Recall that the gradient has been computed in \eqref{eq50}.

\paragraph{Stability of $x_{0,0}$.}

The value of $\nabla f$ at $x_{0,0}$ is not defined.
However, the trajectories issued from points in $\Rset_+^2\times\{0\}^2$, resp.\ $\{0\}^2\times\Rset_+^2$ clearly remain in the respective subspace, and the stability in these directions is controlled by the spectrum of the matrices
\begin{equation*}
\begin{pmatrix}
-1 & \gamma_U{\cal R}_0^U\\
1 & -\gamma_U
\end{pmatrix}
\qquad \text{ and } \qquad
\begin{pmatrix}
-1 & \gamma_W{\cal R}_0^W\\
1 & -\gamma_W
\end{pmatrix}\ .
\end{equation*}
Their eigenvalues are
\begin{equation*}
\frac{1}{2}\left(
-(1+\gamma_\eta) \pm \left(
(1+\gamma_\eta)^2+4\gamma ({\cal R}_0^\eta-1)
\right)^{1/2}
\right),\qquad \eta = U,W\ .
\end{equation*}
One of each pair is positive, due to condition \eqref{eq2}, and $x_{0,0}$ is thus {\em unstable}.

\paragraph{Stability of $x_{U,0}$.}

Using \eqref{eq50}, the gradient  $\nabla f(x_{U,0})$ of $f$ at $x_{U,0}$ is the upper block-triangular matrix
\begin{equation*}
\begin{pmatrix}
-1-2L_U^* & \gamma_U{\cal R}_0^U & -L_U^* & -\gamma_U{\cal R}_0^U\\
1 & -\gamma_U & 0 & 0\\
0 & 0 & -1-L_U^* & \gamma_W{\cal R}_0^W\\
0 & 0 & 1 & -\gamma_W
\end{pmatrix}
= \begin{pmatrix}
1-2{\cal R}_0^U & \gamma_U{\cal R}_0^U & 1-{\cal R}_0^U & -\gamma_U{\cal R}_0^U\\
1 & -\gamma_U & 0 & 0\\
0 & 0 & -{\cal R}_0^U & \gamma_W{\cal R}_0^W\\
0 & 0 & 1 & -\gamma_W
\end{pmatrix}\ .
\end{equation*}
Using the same arguments than the ones used to study $x_{0,0}$ to assess the stability of $2\times 2$ matrices, the two diagonal blocks are asymptotically stable if, respectively,
\begin{equation*}
\frac{{\cal R}_0^U}{2{\cal R}_0^U-1}<1
\qquad \text{ and } \qquad
{\cal R}_0^W<{\cal R}_0^U\ .
\end{equation*}
These conditions are realized when \eqref{eq2} holds.
In conclusion, the equilibrium $x_{U,0}$ is locally asymptotically stable.

\paragraph{Stability of $x_{0,W}$.}

The gradient $\nabla f(x_{0,W})$ of $f$ at $x_{0,W}$ is the lower block-triangular matrix
\begin{equation}
\label{eq26}
\begin{pmatrix}
-1-L_W^* & 0 & 0 & 0\\
1 & -\gamma_U & 0 & 0\\
-L_W^* & 0 & -1-2L_W^* & \gamma_W{\cal R}_0^W\\
0 & 0 & 1 & -\gamma_W
\end{pmatrix}
= \begin{pmatrix}
-{\cal R}_0^W & 0 & 0 & 0\\
1 & -\gamma_U & 0 & 0\\
1-{\cal R}_0^W & 0 & 1-2{\cal R}_0^W & \gamma_W{\cal R}_0^W\\
0 & 0 & 1 & -\gamma_W
\end{pmatrix}\ .
\end{equation}
The left-upper block is a Hurwitz matrix, while the asymptotic stability of the second one is equivalent to
\begin{equation*}
\label{eq5}
2{\cal R}_0^W-1 > {\cal R}_0^W\ ,
\end{equation*}
that is ${\cal R}_0^W>1$, which is true, due to hypothesis \eqref{eq2}.
The equilibrium $x_{0,W}$ is thus locally asymptotically stable.

\paragraph{Stability of $x_{U,W}$.}

The instability of $x_{U,W}$ can be proved by showing that the determinant of the Jacobian matrix  $\nabla f(x_{U,W})$ is negative, which, together with the fact that the state space has even dimension 4, establishes the existence of a positive real root to the  characteristic polynomial; and thus that the Jacobian is not a Hurwitz matrix.
This argument yields lengthy computations.

It is more appropriate to use here the monotonicity properties of system \eqref{eq6}, established in Theorem \ref{th55}.
As a matter of fact, bringing together the inequalities \eqref{eq776} (already proved in the end of the previous section, see \eqref{eq771}), the  asymptotical stability of $x_{U,0}$ and $x_{0,W}$ and the strongly order-preserving property of the system, \cite[Theorem 2.2]{Smith:1995aa} shows that the intermediary point $x_{U,W}$ cannot be stable.
This finally achieves the stability analysis, as well as the proof of Theorem \ref{th3}.

\subsection{Positively invariant sets and basins of attraction}

We exploit further in the following result the inequalities ordering the equilibrium points to have supplementary informations on some invariant sets.

\begin{theo}
\label{th5}
The order interval
\begin{equation*}
\llbracket x_{U,0};x_{0,W}\rrbracket_\cK
:=
\left\{
x \in\Rset^4 \ :\ x_{U,0} \leq_\cK x \leq_\cK x_{0,W} \subset\Rset_+^4
\right\}
\end{equation*}
is positively invariant for system \eqref{eq6}.
Moreover, the order interval $\llbracket x_{U,W};x_{0,W}\rrbracket_\cK$ (resp.\ $\llbracket x_{U,0};x_{U,W}\rrbracket_\cK$) is contained in the basin of attraction of $x_{0,W}$ (resp.\  $x_{0,U}$).
\end{theo}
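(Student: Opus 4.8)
The plan is to read everything off the monotone structure of Theorem \ref{th55}, the ordering and stability of the equilibria in Theorem \ref{th3}, and the ultimate boundedness of Theorem \ref{th1}. For the positive invariance I would use only that the two endpoints are equilibria, so that $\Phi_t(x_{U,0})=x_{U,0}$ and $\Phi_t(x_{0,W})=x_{0,W}$ for all $t\geq 0$. Given $x_0$ with $x_{U,0}\leq_\cK x_0\leq_\cK x_{0,W}$, monotonicity gives $x_{U,0}=\Phi_t(x_{U,0})\leq_\cK \Phi_t(x_0)\leq_\cK \Phi_t(x_{0,W})=x_{0,W}$, i.e.\ the trajectory never leaves the order interval; as that interval sits inside $\Rset_+^4$, Theorem \ref{th1} makes the solution global. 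The \emph{same} argument applies verbatim to the subintervals $\llbracket x_{U,W};x_{0,W}\rrbracket_\cK$ and $\llbracket x_{U,0};x_{U,W}\rrbracket_\cK$, whose endpoints are all equilibria by Theorem \ref{th3}; I would record their positive invariance first, since the basin statements concern trajectories confined to them (I also read $x_{0,U}$ in the statement as $x_{U,0}$).

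For the basins I would begin by checking, from \eqref{eq776}, that $x_{0,0}$ is neither $\geq_\cK x_{U,W}$ nor $\leq_\cK x_{U,W}$, so that each subinterval contains no equilibrium beyond its two endpoints. I treat $\llbracket x_{U,W};x_{0,W}\rrbracket_\cK$ and convergence to $x_{0,W}$, the other case being symmetric. The idea is a squeezing argument: construct one strictly increasing trajectory issued from just above the unstable equilibrium $x_{U,W}$ and rising to $x_{0,W}$, then trap an arbitrary trajectory between it and the constant $x_{0,W}$. To build it I exploit that $x_{U,W}$ has all components positive, hence lies where \eqref{eq50} is irreducible; conjugating by $\diag(-1,-1,1,1)$ turns $\cK$ into $\Rset_+^4$ and turns $\nabla f(x_{U,W})$ into an irreducible Metzler matrix, whose spectral abscissa is, by Perron--Frobenius, a simple real eigenvalue with eigenvector $v\in\inte\ \cK$. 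Instability of $x_{U,W}$ (Theorem \ref{th3}) forces that eigenvalue to be positive, so $f(x_{U,W}+\varepsilon v)\in\inte\ \cK$ for small $\varepsilon>0$ and the trajectory from $z_0:=x_{U,W}+\varepsilon v$ satisfies $z_0\ll_\cK \Phi_s(z_0)$ for small $s>0$. Monotonicity then makes $t\mapsto \Phi_t(z_0)$ nondecreasing for $\leq_\cK$; being bounded inside the invariant interval it converges to an equilibrium, necessarily $x_{0,W}$.

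It remains to squeeze an arbitrary $x_0\in\llbracket x_{U,W};x_{0,W}\rrbracket_\cK$ with $x_0\neq x_{U,W}$. If $(L_U(0),A_U(0))\neq(0,0)$, then by Lemma \ref{le7} together with the irreducibility of \eqref{eq50} off the faces (recall $A_W\geq A_W^{**}>0$ throughout this interval), the strongly order-preserving property of Theorem \ref{th55} yields $x_{U,W}\ll_\cK \Phi_\tau(x_0)$ for any $\tau>0$. Choosing $\varepsilon$ small enough that $z_0\leq_\cK \Phi_\tau(x_0)$, monotonicity gives $\Phi_t(z_0)\leq_\cK \Phi_{t+\tau}(x_0)\leq_\cK x_{0,W}$ for all $t\geq 0$, and letting $t\to\infty$ with $\Phi_t(z_0)\to x_{0,W}$ forces $\Phi_t(x_0)\to x_{0,W}$. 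If instead $(L_U(0),A_U(0))=(0,0)$, the trajectory stays on the face $\{L_U=A_U=0\}$ (Lemma \ref{le7}), where \eqref{eq6} reduces to the planar cooperative system $\dot L_W=\gamma_W{\cal R}_0^W A_W-(1+L_W)L_W$, $\dot A_W=L_W-\gamma_W A_W$, of exactly the form handled for \eqref{eq30}; since $A_W\geq A_W^{**}>0$ excludes the origin, the Hirsch argument of Theorem \ref{th1} gives direct convergence to $x_{0,W}$. The interval $\llbracket x_{U,0};x_{U,W}\rrbracket_\cK$ is symmetric: the same $v$ now yields a strictly \emph{decreasing} trajectory from $x_{U,W}-\varepsilon v$ down to $x_{U,0}$, one squeezes from above, and the face $\{L_W=A_W=0\}$ (where $A_U\geq A_U^{**}>0$) is handled by the corresponding planar subsystem.

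The main obstacle I expect is the rigorous passage from the strong order-preservation of Theorem \ref{th55} to the strict relation $x_{U,W}\ll_\cK \Phi_\tau(x_0)$ required by the squeeze: the system is \emph{not} strongly monotone on all of $\Rset_+^4$, so the argument must localize irreducibility of \eqref{eq50} to trajectories meeting $\{A_U>0\}\cap\{A_W>0\}$ and dispatch the two boundary faces separately, precisely as in the case split above. The second delicate point is the construction of the monotone trajectory emanating from $x_{U,W}$ through the positive Perron eigenvector, which rests on the interior location of $x_{U,W}$ securing irreducibility of the Jacobian at that equilibrium.
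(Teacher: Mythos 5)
Your strategy is genuinely different from the paper's, and in most respects sound --- indeed more ambitious. The paper disposes of Theorem \ref{th5} in a few lines: positive invariance follows from monotonicity exactly as in your first paragraph, and the basin statements are obtained by combining boundedness, the generic convergence theorem for strongly monotone flows (\cite[Theorem 7.8]{Hirsch:1988aa}) and the stability classification of Theorem \ref{th3}, so that \emph{almost every} point of $\llbracket x_{U,W};x_{0,W}\rrbracket_\cK$ must converge to the only stable equilibrium available in that interval. Your squeeze argument instead aims at pointwise convergence of every trajectory other than the one sitting at $x_{U,W}$, and its skeleton is correct: invariance of the subintervals, absence of equilibria other than the endpoints, the localization of strong monotonicity to $\{A_U>0\}\cap\{A_W>0\}$ via Lemma \ref{le7}, and the reduction of the two faces to the planar cooperative systems already treated in the proof of Theorem \ref{th1}.

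There is, however, one genuine gap, and it sits at the heart of your construction: the claim that ``instability of $x_{U,W}$ (Theorem \ref{th3}) forces the Perron eigenvalue to be positive.'' Theorem \ref{th3} establishes only \emph{Lyapunov} instability of $x_{U,W}$, and it does so precisely by the order-theoretic route (\cite[Theorem 2.2]{Smith:1995aa}: no stable equilibrium strictly between two stable ones), not by linearization. Lyapunov instability implies that the spectral abscissa $s$ of $\nabla f(x_{U,W})$ satisfies $s\geq 0$, not $s>0$: an equilibrium can be unstable through nonlinear terms while $s=0$ (as for $\dot z=z^3$). If $s=0$, then $f(x_{U,W}+\varepsilon v)=\varepsilon s v+o(\varepsilon)$ need not lie in $\inte\ \cK$, no monotone trajectory emanating from $z_0$ is obtained, and the whole ladder-and-squeeze collapses. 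The paper is explicitly aware that proving \emph{linear} instability --- showing $\det\nabla f(x_{U,W})<0$, which in even dimension yields a positive real eigenvalue and hence $s>0$ --- ``yields lengthy computations'' and deliberately avoids it, so you cannot import $s>0$ from Theorem \ref{th3} for free. Two repairs are available: (i) actually carry out the determinant computation; or (ii) bypass spectra altogether and invoke the Dancer--Hess order-interval dichotomy for strongly order-preserving semiflows: since $\llbracket x_{U,W};x_{0,W}\rrbracket_\cK$ is invariant, has compact closure dynamics, and contains no equilibria besides its endpoints, there is an entire orbit monotonically connecting one endpoint to the other; asymptotic stability of $x_{0,W}$ rules out a decreasing orbit from $x_{0,W}$ to $x_{U,W}$, so an increasing entire orbit from $x_{U,W}$ to $x_{0,W}$ exists and furnishes exactly the trajectory your squeeze requires, with points arbitrarily close to $x_{U,W}$. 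With either repair your argument goes through, and in fact delivers a conclusion sharper than the almost-everywhere statement the paper's own proof establishes.
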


\begin{proof}

The positive invariance properties are direct consequences of the monotonicity properties exhibited in Theorem \ref{th55}.
More precisely, endowing the state space with the ordering induced by the  cone  $\cK$ (see \eqref{coneK}), the autonomous system \eqref{eq10} induces a  monotone flow in $\Rset^4$, strongly monotone in $\Rset_+^4\setminus \left( \Rset_+^2\times\{0\}^2\cup \{0\}^2\times \Rset_+^2\right)$.
As the trajectories are bounded, the set of  initial points whose corresponding trajectories do not converge to one of the equilibria is of zero measure \cite[Theorem 7.8]{Hirsch:1988aa}.
Among the equilibria, only $x_{U,0}$ and $x_{0,W}$ are locally stable.

The same rationale applies for any trajectory with initial condition in the order interval $\llbracket x_{U,W};x_{0,W}\rrbracket_\cK\setminus\left( \Rset_+^2\times\{0\}^2\cup \{0\}^2\times \Rset_+^2\right)$, and the convergence (for almost every initial condition) in this interval can only occur towards $x_{0,W}$: the latter is therefore included in the basin of attraction.
The same argument applies for the other equilibrium $x_{U,0}$.
\end{proof}

\section{Analysis of the controlled system}
\label{se3}

\subsection{A class of static output-feedback control laws}

The following feedback law will be considered in the sequel:
\begin{equation}
\label{eq01}
u = KL_U
\end{equation}
for adequate (positive) values of the {\em scalar gain} $K$.
Writing
\begin{equation*}
e := \begin{pmatrix}
1 \\ 0 \\ 0 \\ 0
\end{pmatrix}
\end{equation*}
one obtains the closed-loop system:
\begin{equation}
\label{eq20}
\dot x = f(x) + KBe\t x\ ,
\end{equation}
or in developed form:
\begin{subequations}
\label{SysK}
\begin{gather}
\label{SysKa}
\dot L_U = \gamma_U{\cal R}_0^U \frac{ A_U}{ A_U+ A_W} A_U - (1+ L_W+ L_U) L_U \\
\label{SysKb}
\dot A_U =  L_U -\gamma_U  A_U\\
\label{SysKc}
\dot L_W = \gamma_W {\cal R}_0^W  A_W - (1+ L_W+ L_U) L_W + KL_U \\
\label{SysKd}
\dot A_W = L_W -\gamma_W  A_W
\end{gather}
\end{subequations}

The basic results gathered in the following theorem can be demonstrated by use of the same arguments than for  Theorem \ref{th1}.
The proof presents no difficulty and is left to the reader.

\begin{theo}
\label{th7}
For any initial value in $\Rset_+^4$, there exists a unique solution to the initial value problem associated to system \eqref{eq20}. 
The latter is defined on $[0,+\infty)$, depends continuously on the initial conditions and takes on values in $\Rset_+^4$.
Moreover, it is uniformly ultimately bounded.
\end{theo}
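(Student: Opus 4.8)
The plan is to follow verbatim the structure of the proof of Theorem \ref{th1}, since the closed-loop vector field $f(x) + KBe\t x$ differs from $f$ only by the single linear term $KL_U$ appearing in the third component \eqref{SysKc}. First I would observe that this extra term neither introduces new singularities nor destroys regularity: the only point where $f$ fails to be defined is $A_U=A_W=0$, where the continuous extension already used in Theorem \ref{th1} applies unchanged, while away from that set $f(x) + KBe\t x$ is locally Lipschitz on $\Rset_+^4$. Classical theory then yields local existence, uniqueness and continuous dependence on the initial data, as long as the trajectory stays in $\Rset_+^4$.

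Next I would verify positive invariance of $\Rset_+^4$ by checking the boundary condition $x_i=0 \Rightarrow (f(x)+KBe\t x)_i\geq 0$ on each face. Three of the four components coincide with those of \eqref{eq6} and were already treated; the only modified one is $\dot L_W$, which on the face $L_W=0$ equals $\gamma_W{\cal R}_0^W A_W + KL_U$. Since $K>0$ and $A_W,L_U\geq 0$ on $\Rset_+^4$, this is nonnegative, so the feedback term can only push the trajectory inward. Hence $\Rset_+^4$ remains invariant.

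The one place where the control genuinely enters the estimate --- and thus the only mild obstacle --- is the boundedness argument. Reusing the aggregated variables $L,A$ and the constants $\gamma,{\cal R}_0$ from \eqref{eq800}, I would bound $\dot L$ as before but now carry the extra term: using $L_U\leq L$ and $K>0$,
\begin{equation*}
\dot L \leq \gamma{\cal R}_0 A - (1+L)L + KL_U \leq \gamma{\cal R}_0 A - (1+L)L + KL,
\end{equation*}
together with $\dot A \leq L - \gamma A$. The comparison system accordingly becomes
\begin{equation*}
\dot L' = \gamma{\cal R}_0 A' - (1+L')L' + KL',\qquad \dot A' = L' - \gamma A',
\end{equation*}
which is again cooperative, so Kamke's theorem gives $L(t)\leq L'(t)$, $A(t)\leq A'(t)$. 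A direct computation shows this auxiliary system has exactly the two equilibria $(0,0)$ and $\left({\cal R}_0+K-1,\ ({\cal R}_0+K-1)/\gamma\right)$, the latter positive by \eqref{eq77} and $K>0$; linearization shows the origin unstable and the positive point LAS, so \cite[Theorem 10.3]{Hirsch:1988aa} gives global convergence to the corresponding order interval, exactly as for \eqref{eq30}. Comparison then bounds $L,A$, and with positivity this bounds each of $L_U,L_W,A_U,A_W$ componentwise, yielding uniform ultimate boundedness; the absence of finite-time blow-up in turn provides global existence on $[0,+\infty)$, completing the proof.
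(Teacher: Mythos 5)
Your proposal is correct and follows exactly the route the paper intends: the paper's ``proof'' of Theorem \ref{th7} merely states that the same arguments as for Theorem \ref{th1} apply and leaves the details to the reader, which is precisely what you have carried out. Your execution correctly isolates the only point where the feedback term matters --- the comparison system acquiring the extra $KL'$ term, with its positive equilibrium shifted to $\left({\cal R}_0+K-1,\ ({\cal R}_0+K-1)/\gamma\right)$ --- and the invariance, cooperativity, and Kamke comparison steps all go through as claimed.
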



\subsection{Equilibrium points and critical gain}

We now turn to the study of the equilibrium points.
The following result shows that, for gains larger than certain critical value, the only locally asymptotically stable equilibrium is $x_{0,W}$.
Moreover, the explicit value of this critical number depends only upon the basic offspring numbers of the two populations and the ratio between their mortality rates, which are all scale-free information.

\begin{theo}
\label{th9}
If the feedback gain $K$ is such that
\begin{equation}
\label{eq12}
K > K^* := \frac{\gamma_W}{\gamma_U}\left(
\sqrt{{\cal R}_0^U}-\sqrt{{\cal R}_0^W}
\right)^2\ ,
\end{equation}
then the closed-loop system \eqref{eq20} possesses two equilibrium points, namely $x_{0,0}$ and $x_{0,W}$, and their local stability properties are not modified (i.e. $x_{0,0}$ is unstable and $x_{0,W}$ is locally stable).
\end{theo}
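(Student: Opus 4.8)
The plan is to separate, exactly as in the proof of Theorem~\ref{th3}, the enumeration of equilibria from the local stability analysis, exploiting throughout that the feedback enters \eqref{SysK} only through the single scalar term $KL_U$ added to \eqref{SysKc}. Setting the right-hand side of \eqref{SysK} to zero, the last two equations again force $L_U=\gamma_U A_U$ and $L_W=\gamma_W A_W$, so I may parametrize an equilibrium by $(A_U,A_W)$. Since the added term $KL_U$ vanishes wherever $L_U=0$, the points $x_{0,0}$ and $x_{0,W}$ (both having $L_U=A_U=0$) remain equilibria for every $K>0$; that part is immediate. The substance is to rule out any further equilibrium, which I would do by a case split on the signs of $A_U,A_W$:
\begin{itemize}
\item[$\star$] If $A_U=0$ then $L_U=0$, the control drops out, and the $(L_W,A_W)$-equations reduce to the uncontrolled relations \eqref{eq11}, whose only nonnegative solutions yield $x_{0,0}$ and $x_{0,W}$.
\item[$\star$] If $A_U>0$ but $A_W=0$ then $L_W=0$ and \eqref{SysKc} collapses to $KL_U=0$, contradicting $K>0$, $L_U>0$; in particular the disease-free point $x_{U,0}$ is destroyed by \emph{any} positive gain.
\item[$\star$] The only remaining case is a genuine coexistence equilibrium with $A_U,A_W>0$.
\end{itemize}

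For the coexistence case, I would set $p:=\frac{A_U}{A_U+A_W}\in(0,1)$, divide \eqref{SysKa} by $\gamma_U A_U$ and \eqref{SysKc} by $L_W$, and cancel the common factor $1+L_U+L_W$. Using $\frac{L_U}{L_W}=\frac{\gamma_U}{\gamma_W}\frac{p}{1-p}$ this gives ${\cal R}_0^U p = {\cal R}_0^W + \kappa\frac{p}{1-p}$, equivalently the quadratic
\[
{\cal R}_0^U\,p^2 - ({\cal R}_0^U+{\cal R}_0^W-\kappa)\,p + {\cal R}_0^W = 0 ,
\]
with $\kappa:=K\gamma_U/\gamma_W$. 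A coexistence equilibrium exists iff this has a root in $(0,1)$. Its discriminant $\Delta=({\cal R}_0^U+{\cal R}_0^W-\kappa)^2-4{\cal R}_0^U{\cal R}_0^W$ vanishes exactly at $\kappa={\cal R}_0^U+{\cal R}_0^W\mp 2\sqrt{{\cal R}_0^U{\cal R}_0^W}$, and the smaller threshold is $\kappa^*=(\sqrt{{\cal R}_0^U}-\sqrt{{\cal R}_0^W})^2$, i.e.\ precisely $K=K^*$. For $K>K^*$ I would show no admissible root survives: for $\kappa$ below the larger threshold one has $\Delta<0$ (no real root), while for $\kappa$ above it the product of roots equals ${\cal R}_0^W/{\cal R}_0^U>0$ and their sum $({\cal R}_0^U+{\cal R}_0^W-\kappa)/{\cal R}_0^U$ is negative, so both roots are negative and again none lies in $(0,1)$. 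This identification of $K^*$ with the discriminant threshold — a saddle-node point where the two coexistence equilibria present for $0<K<K^*$ merge and annihilate — is the technical heart of the argument.

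For stability, the closed-loop Jacobian is $\nabla f(x)+KBe\t$, and since $Be\t$ has its unique nonzero entry (equal to $1$) in position $(3,1)$, it differs from \eqref{eq50} solely in that entry. At $x_{0,W}$ the matrix is the one in \eqref{eq26} with $1-{\cal R}_0^W$ replaced by $1-{\cal R}_0^W+K$ in position $(3,1)$; this matrix is block lower-triangular for the index partition $\{1,2\},\{3,4\}$, and the modified entry sits in the off-diagonal $(3,1)$ block, so the two diagonal blocks — and hence the full spectrum — are unchanged. Those blocks were already shown Hurwitz in Section~\ref{se232} under \eqref{eq2}, so $x_{0,W}$ remains LAS for every $K$. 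At $x_{0,0}$, where $\nabla f$ is undefined, I would reuse the invariant-subspace argument: the set $\{L_U=A_U=0\}$ is invariant, the feedback $KL_U$ vanishes on it, and the restricted $(L_W,A_W)$-dynamics coincide with the uncontrolled ones, whose linearization has a positive eigenvalue because ${\cal R}_0^W>1$; hence $x_{0,0}$ stays unstable. Together these establish that for $K>K^*$ the only equilibria are $x_{0,0}$ and $x_{0,W}$, with stability types unchanged.
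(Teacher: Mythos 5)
Your proposal is correct and takes essentially the same route as the paper: the paper also reduces the search for extra equilibria to a scalar quadratic in the ratio $\theta=A_U/A_W$ (your $p=\theta/(1+\theta)$ is just a reparametrization yielding the same thresholds $\kappa=(\sqrt{{\cal R}_0^U}\mp\sqrt{{\cal R}_0^W})^2$), and it likewise observes that the feedback perturbs the Jacobian at $x_{0,W}$ only in the $(3,1)$ entry, which lies in the off-diagonal block of the block-triangular matrix \eqref{eq26} and hence leaves the spectrum untouched (the paper phrases this via a determinant expansion rather than block-triangularity). Your explicit treatment of the case $A_U>0$, $A_W=0$ and of the instability of $x_{0,0}$ via the invariant subspace $\{L_U=A_U=0\}$ only makes explicit what the paper leaves implicit.
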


\begin{proof}[Proof of Theorem \ref{th9}]
\mbox{}

\noindent $\bullet$
The equilibrium points of system \eqref{eq20} are the points that verify
\begin{equation}
\label{eq200}
f(x) + KBe\t x = 0\ .
\end{equation}
Clearly, the points $x_{0,0}$ and $x_{0,W}$ are still equilibria of system \eqref{eq20}, as in these points $e\t x = L_U = 0$; and there are no other equilibria with $L_U=0$.
In fact, from the third and fourth equations of \eqref{eq200}, one should obtain
\begin{equation*}
0 = \gamma_W {\cal R}_0^W  A_W - (1+ L_W) L_W
=  (L_W^*- L_W) L_W\ ,
\end{equation*}
where the second identity follows from the definition of $L_W^*$ in \eqref{eq21}.
Let us show that there are no other equilibria than $x_{0,0}$ and $x_{0,W}$.

At any equilibrium point such that $L_U\neq 0$, \eqref{eq200} yields
\begin{equation}
\label{eq201}
0 = \gamma_W {\cal R}_0^W  A_W - (1+ L_W+ L_U) L_W + KL_U
=  ({\cal R}_0^W   - 1- L_W- L_U) L_W + KL_U
\end{equation}
and thus $L_W\neq 0$.
At such equilibrium point, one should have
\begin{equation*}
{\cal R}_0^U \frac{A_U}{A_U+A_W}
= {\cal R}_0^W+K \frac{L_ U}{L_W}
= {\cal R}_0^W+\frac{\gamma_ U}{\gamma_W}K \frac{A_ U}{A_W}
\end{equation*}
Defining the unknown quantity
\begin{equation*}
\theta := \frac{A_ U}{A_W}\ ,
\end{equation*}
the latter should fulfill
\begin{equation}
\label{eq28}
{\cal R}_0^U \frac{\theta}{1+\theta} = {\cal R}_0^W + \frac{\gamma_ U}{\gamma_W}K \theta\ ,
\end{equation}
that is
\begin{equation*}
\frac{\gamma_ U}{\gamma_W}K \theta^2
+\left(
\frac{\gamma_ U}{\gamma_W}K + {\cal R}_0^W -{\cal R}_0^U
\right)\theta
+{\cal R}_0^W = 0
\end{equation*}
The roots of this equation are given by
\begin{equation*}
\frac{\gamma_W}{2\gamma_ UK}
\left(
-\left(
\frac{\gamma_ U}{\gamma_W}K + {\cal R}_0^W -{\cal R}_0^U
\right)
\pm\sqrt{\left(
\frac{\gamma_ U}{\gamma_W}K + {\cal R}_0^W -{\cal R}_0^U
\right)^2-4{\cal R}_0^W\frac{\gamma_ U}{\gamma_W}K}
\right)
\end{equation*}

For positive values of $K$, there exist real nonnegative solutions to this equation if, and only if,
\begin{equation*}
\frac{\gamma_ U}{\gamma_W}K + {\cal R}_0^W -{\cal R}_0^U \leq 0
\qquad \text{ and } \qquad
\left(
\frac{\gamma_ U}{\gamma_W}K + {\cal R}_0^W -{\cal R}_0^U
\right)^2\geq 4{\cal R}_0^W\frac{\gamma_ U}{\gamma_W}K\ ,
\end{equation*}
that is if and only if
\begin{equation*}
\frac{\gamma_ U}{\gamma_W}K + {\cal R}_0^W -{\cal R}_0^U \leq -2\sqrt{{\cal R}_0^W\frac{\gamma_U}{\gamma_W}K}\ .
\end{equation*}
This is equivalent to
\begin{equation*}
\left(
\sqrt{\frac{\gamma_ U}{\gamma_W}K} + \sqrt{{\cal R}_0^W}
\right)^2 -{\cal R}_0^U \leq 0
\end{equation*}
or again
\begin{equation*}
\label{eq52}
K \leq \frac{\gamma_W}{\gamma_U}\left(
\sqrt{{\cal R}_0^U}-\sqrt{{\cal R}_0^W}
\right)^2
= K^*\ .
\end{equation*}
As hypothesis \eqref{eq12} is incompatible with the previous inequality, we deduce that system \eqref{eq20} possess only two equilibrium points.

\noindent $\bullet$
 We now study the local stability properties of the latter, by applying adequate modifications to the gradient exhibited in \eqref{eq50} and used in Section \ref{se232} to study the stability of the uncontrolled model equilibria.
 In fact, one just has to add to $\nabla f(x_{0,W})$ the term
 \begin{equation*}
 \begin{pmatrix}
 0 & 0 & 0 & 0\\
 0 & 0 & 0 & 0\\
 K & 0 & 0 & 0\\
 0 & 0 & 0 & 0
 \end{pmatrix}\ .
 \end{equation*}
It is clear, due to the form of this additional term, that the characteristic polynomial of the system obtained from linearizing \eqref{eq20} at $x_{0,W}$ is {\em affine} with respect to $K$, and that, for $K=0,$ it coincides with the characteristic polynomial of the linearization of \eqref{eq10}.
 
 Developing the  determinant $\det (\lambda I -\nabla f (x_{0,W}) -KBe\t)$ (see \eqref{eq26}), the additional term is equal, for the feedback control law defined in \eqref{eq01}, to
 \begin{equation*}
 \label{eq801}
 - 
 K \begin{vmatrix}
0 & 0 & 0\\
\lambda+\gamma_U & 0 & 0\\
0 & -1 & \lambda+\gamma_W
\end{vmatrix}
=0 \ .
 \end{equation*}
Therefore the local behavior is not perturbed, and the asymptotic stability of the equilibrium $x_{0,W}$ is conserved when the control term $KBe\t x$ is added.
This achieves the proof of Theorem \ref{th9}.
\end{proof}

 \begin{rema}
 It can be checked from the latter proof that the previous result is not true when the effect of cytoplasmic incompatibility is absent.
 The latter is materialized by the term $\frac{A_U}{A_U+A_W}$ present in the first line in equation \eqref{eq20}.
 When replacing this term by 1, \eqref{eq28} is replaced by
\begin{equation*}
{\cal R}_0^U = {\cal R}_0^W + \frac{\gamma_ U}{\gamma_W}K \theta\ ,
\end{equation*}
 which possesses the positive solution
 \begin{equation*}
 \theta = 
 \frac{\gamma_W}{\gamma_ UK}\left(
 {\cal R}_0^U- {\cal R}_0^W
 \right)\ ,
 \end{equation*}
leading to a coexistence equilibrium solution, in addition to $x_{0,0}$ and $x_{0,W}$.
 \end{rema}

\subsection{Global stability issues}

We now turn to the most innovative part of this paper, namely the global behavior of the closed-loop system \eqref{eq20}.
The result  we establish here shows that the introduction of infected larvae according to the proportional feedback law \eqref{eq01} yields conclusive infestation when the gain is larger than the critical value.
More precisely, we have the following convergence result.

\begin{theo}
\label{th12}
If $K>K^*$, all trajectories of system \eqref{eq20} issuing from a point in $\Rset_+^4$ distinct from $x_{0,0}$ converge towards the complete infestation equilibrium $x_{0,W}$.
\end{theo}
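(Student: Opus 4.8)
The plan is to exploit the monotone structure established in Theorem \ref{th55}, but applied to the \emph{open-loop} system $\dot x = f(x)+Bu$ regarded as a control system with scalar input $u$ and scalar output $y := L_U = e\t x$, rather than to the closed loop itself. Indeed, the closed loop \eqref{eq20} is \emph{not} monotone for $\cK$: the entry $\partial \dot L_W/\partial L_U = K - L_W$ of its Jacobian changes sign along trajectories, so one cannot simply rerun the monotone-dynamics argument used for Theorem \ref{th3}. By contrast, as a control system the state Jacobian is exactly $\nabla f$ from \eqref{eq50}, which is $\cK$-cooperative, and the input acts through $B=(0,0,1,0)\t$, i.e. into the ``positive'' face $\Rset_+$ of $\cK$; hence larger inputs produce $\cK$-larger trajectories. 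This makes the open loop a monotone SISO control system, with the output $y=L_U$ order-reversing (a $\cK$-larger state carries a smaller $L_U$), so that the loop closed by $u=Ky$, $K>0$, is a \emph{negative} feedback.

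Next I would compute the input--state static characteristic, i.e. the equilibria reached under a constant input $u\equiv\bar u\ge 0$. There is always a ``$U$-free'' branch on which $L_U=A_U=0$ and $(L_W,A_W)$ sits at the positive root of $({\cal R}_0^W-1-L_W)L_W+\bar u=0$; on this branch the output is identically $y=0$, and closing the loop forces $\bar u=Ky=0$, returning exactly $x_{0,W}$. For $\bar u$ below a saddle-node value there is in addition a ``$U$-dominant'' branch with $y=L_U>0$; the content of Theorem \ref{th9} is precisely that $K>K^*$ (with $K^*$ as in \eqref{eq12}) makes the feedback line $\bar u=Ky$ too steep to meet this branch, so that $x_{0,0}$ and $x_{0,W}$ are the only closed-loop equilibria. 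The heart of the convergence proof is then to show that every trajectory but the one stuck at $x_{0,0}$ is driven onto the $U$-free branch, i.e. that $L_U(t),A_U(t)\to 0$.

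To obtain this I would run a monotone comparison/small-gain iteration on the feedback signal. Using uniform ultimate boundedness (Theorem \ref{th7}), $u(t)=KL_U(t)$ lives in a compact interval $[0,\bar u_0]$; feeding the two constant extremes into the monotone open loop bounds the state from above and below in the $\cK$-order by the corresponding characteristic equilibria, which in turn bounds the output $y=L_U$, and re-applying $u=Ky$ yields a tighter input interval. Because $K>K^*$ removes every fixed point of this interval map except the one with $L_U=0$, I expect the iteration to collapse the $L_U$-bound to $0$, giving $L_U,A_U\to 0$. Once this is known, the residual dynamics become the asymptotically autonomous cooperative system $\dot L_W=\gamma_W{\cal R}_0^W A_W-(1+L_W)L_W+o(1)$, $\dot A_W=L_W-\gamma_W A_W$, whose positive solutions converge to $(L_W^*,A_W^*)$; hence $x(t)\to x_{0,W}$. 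The remaining boundary bookkeeping is easy: by the analogue of Lemma \ref{le7}, the only point from which the infected population is never seeded is $x_{0,0}$ itself, since as soon as $L_U>0$ the feedback term $KL_U$ in \eqref{eq20} makes $L_W$, and then $A_W$, strictly positive.

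The step I expect to be the main obstacle is exactly the global part of the third paragraph. The textbook small-gain theorem for monotone control systems presumes a \emph{single-valued} characteristic, whereas here the open loop is bistable for small constant inputs, and the closed loop, being genuinely non-monotone, could a priori sustain oscillations rather than settle. Making the comparison iteration converge therefore requires more than counting equilibria: one must use the strict inequality $K>K^*$ quantitatively --- to show the $U$-dominant branch is not merely missed but repelling for the feedback dynamics --- and rule out nontrivial $\omega$-limit behavior, presumably via the strong order-preserving property of the open loop together with the fact that $x_{0,W}$ is the unique $\cK$-maximal equilibrium. This is where the real work lies.
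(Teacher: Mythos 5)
Your plan is, up to its third paragraph, the decomposition that the paper itself tries and then rejects in Section \ref{se332}: the open loop \eqref{eq66a} with output $y=L_U$, closed by the negative feedback $u=Ky$, with the branch structure of the multivalued characteristic and Theorem \ref{th9} counting the closed-loop equilibria. The genuine gap is exactly the step your own last paragraph flags but does not fill, and it is not a technicality one can push through: with this decomposition the interval/small-gain iteration \emph{cannot} contract. Since the open loop \eqref{eq66a} is bistable at $\bar u=0$, the characteristic is multivalued with $k^{\sup}(0)=L_U^{*}>0$; the lower input bound of any comparison iteration is $0$ (the $U$-free branch has identically zero output), so the upper output bound it generates is permanently stuck at $k^{\sup}(0)=L_U^{*}$ and the input interval stalls at $[0,KL_U^{*}]$ instead of shrinking. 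This is precisely the paper's negative conclusion there: the only information available is $0\leq\limsup_{t\to+\infty} y \leq k^{\sup}\left(\liminf_{t\to+\infty} y\right)$, which is vacuous once $\liminf_t y=0$, and the hypothesis $K>K^{*}$ --- which only guarantees that the diagonal misses the upper branch --- cannot be used \emph{quantitatively} to restore contraction, because absence of fixed points of a multivalued map does not imply convergence of its iterations. Your fallback (strong order preservation plus $\cK$-maximality of $x_{0,W}$) cannot close the argument either: those limit-set dichotomies apply to monotone flows, and the closed loop \eqref{eq20}, as you correctly observe at the outset, is not monotone.

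The missing idea, supplied by the paper in Section \ref{se333}, is to repair the \emph{decomposition} rather than the fixed-point argument: split the cross term $(K-L_W)L_U$ appearing in \eqref{SysKc} nonlinearly, keeping its nonpositive part $|K-L_W|_-L_U$ inside the plant \eqref{eq013c} and routing only $y=\left|1-\frac{L_W}{K}\right|_+L_U$ (see \eqref{eq013e}) through the loop, so that $u=y$ still reproduces \eqref{eq20}. The retained term acts, at any would-be equilibrium with $L_U\neq 0$, like a feedback gain $K'\geq K>K^{*}$, so by the computation of Theorem \ref{th9} the $U$-dominant branch disappears for \emph{every} constant input: the modified plant has a single-valued quasi-characteristic, identically zero (Lemma \ref{le12}), while remaining monotone with negative feedback for the cone $\cK$ (Lemma \ref{le13}). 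With that in hand, the $\liminf$/$\limsup$ comparison you envisioned does close: assuming $\liminf_t y>0$ forces $\limsup_t y\leq 0$, a contradiction; hence $\liminf_t y=0$, and a second comparison gives $y\to 0$ and then $x\to x_{0,W}$ whenever $(L_W(0),A_W(0))\neq(0,0)$, the remaining boundary case being disposed of by the seeding argument you sketch at the end of your third paragraph. Without this re-decomposition, or an equivalent device producing a single-valued characteristic, the proposal as written cannot be completed.
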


Notice that strictly speaking, Theorem \ref{th12} is an {\em almost global{ } convergence} result: convergence towards the complete infestation equilibrium is ensured, except for a zero measure set of initial conditions.
However, in the present case, this set is reduced to the unstable equilibrium.

Two attempts  to prove Theorem \ref{th12} are rapidly presented in Sections \ref{se331} and \ref{se332}.
The main interest is to show how these quite natural approaches fail to provide information on the asymptotic behavior and, therefore, that a new approach is needed.
Next a third conclusive method is exposed in Section \ref{se333}, where Theorem \ref{th12} is finally proved.

\subsubsection{Global stability of a singularly perturbed system, by LaSalle's invariance principle}
\label{se331}

We present now a first attempt, based on Lyapunov techniques.
Consider the simpler system
\begin{subequations}
\label{eq78}
\begin{gather}
\label{eq78a}
\dot L_U = {\cal R}_0^U \frac{\gamma_WL_U}{\gamma_WL_U+ \gamma_UL_W} L_U - (1+ L_W+ L_U) L_U \\
\label{eq78b}
\dot L_W =  {\cal R}_0^W  L_W - (1+ L_W+ L_U) L_W + KL_U
\end{gather}
\end{subequations}
As can  be easily verified, system \eqref{eq78} is deduced from \eqref{SysK} by applying singular perturbation, formally putting $0=L_U -\gamma_U  A_U$, $0 = L_W -\gamma_W  A_W$.
In other words, we assume here that \eqref{SysKb} and \eqref{SysKd} are fast dynamics, while \eqref{SysKa} and \eqref{SysKc} are comparatively much slower.

The proof of well-posedness and positiveness of system \eqref{eq78} presents no difficulties, one states directly the asymptotic properties of this system.

\begin{theo}
Assume $K>K^*$.
Then system \eqref{eq78} possesses two equilibria, namely $(0,0)$ and $(0,L_W^*)$.
The former one is unstable, while the latter one is locally asymptotically stable.
Last, all trajectories of system \eqref{eq78} converges towards $(0,L_W^*)$, except the unstable equilibrium $(0,0)$ itself.
\end{theo}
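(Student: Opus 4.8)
The plan is to treat the planar system \eqref{eq78} in three stages --- equilibria, local stability, global convergence --- with the last stage resting on a change of variables that turns \eqref{eq78} into a cascade. First I would record well-posedness, positivity and uniform ultimate boundedness exactly as in Theorem \ref{th1}: with $L:=L_U+L_W$ and the bound ${\cal R}_0^U\frac{\gamma_WL_U}{\gamma_WL_U+\gamma_UL_W}L_U\le{\cal R}_0^UL_U$ one gets $\dot L\le(C-1-L)L$ with $C:=\max\{{\cal R}_0^U+K;{\cal R}_0^W\}$, so $L$ is ultimately bounded. For the equilibria, any steady state with $L_U=0$ reduces through \eqref{eq78b} to $({\cal R}_0^W-1-L_W)L_W=0$, yielding precisely $(0,0)$ and $(0,L_W^*)$. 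For a steady state with $L_U\neq0$, dividing \eqref{eq78a} by $L_U$ forces $L_W\neq0$, and eliminating the common factor $1+L_U+L_W$ between \eqref{eq78a} and \eqref{eq78b} reproduces exactly equation \eqref{eq28} for the rescaled ratio $\theta=\frac{\gamma_W}{\gamma_U}\frac{L_U}{L_W}$; by the computation already performed in the proof of Theorem \ref{th9}, this has no positive root when $K>K^*$. Hence $(0,0)$ and $(0,L_W^*)$ are the only equilibria.

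For local stability I would note that the cytoplasmic-incompatibility term is $O(L_U^2)$ near $L_U=0$, so at $(0,L_W^*)$ the Jacobian is lower triangular with diagonal entries $-(1+L_W^*)=-{\cal R}_0^W$ and ${\cal R}_0^W-1-2L_W^*=-L_W^*$, both negative; thus $(0,L_W^*)$ is LAS. The Jacobian at $(0,0)$ is undefined because of the singularity of the CI term, so instead I would use that the axis $\{L_U=0\}$ is invariant and carries the logistic flow $\dot L_W=({\cal R}_0^W-1-L_W)L_W$; since ${\cal R}_0^W>1$ this pushes $L_W$ away from $0$, making $(0,0)$ unstable.

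The heart of the argument is the substitution $\phi:=L_U/L_W$, legitimate on the positively invariant interior (the set $\{L_U=0\}$ is invariant and $\dot L_W=KL_U>0$ on $\{L_W=0,\,L_U>0\}$, so interior trajectories stay interior). A direct computation shows that the terms $1+L_U+L_W$ cancel and the system decouples into the cascade $\dot\phi=\phi\,g(\phi)$, with $g(\phi):={\cal R}_0^U\frac{\gamma_W\phi}{\gamma_W\phi+\gamma_U}-{\cal R}_0^W-K\phi$, driving $\dot L_W=L_W\big[({\cal R}_0^W-1+K\phi)-(1+\phi)L_W\big]$. The equation $g(\phi)=0$ is, after the rescaling above, exactly \eqref{eq28}, which by Theorem \ref{th9} has no root in $(0,\infty)$ when $K>K^*$; together with $g(0)=-{\cal R}_0^W<0$ and $g(\phi)\to-\infty$ this forces the attained supremum $\sup_{\phi\ge0}g=-c<0$. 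Therefore $\dot\phi\le-c\phi$ and $\phi(t)\to0$ exponentially. Feeding $\phi\to0$ into the $L_W$-equation, which is then asymptotically autonomous with limiting logistic $\dot L_W=({\cal R}_0^W-1-L_W)L_W$, I would sandwich, for $\phi<\epsilon$, its right-hand side between the two logistic fields $L_W[({\cal R}_0^W-1)-(1+\epsilon)L_W]$ and $L_W[({\cal R}_0^W-1+K\epsilon)-L_W]$; a comparison argument then gives $\frac{{\cal R}_0^W-1}{1+\epsilon}\le\liminf L_W\le\limsup L_W\le{\cal R}_0^W-1+K\epsilon$, and letting $\epsilon\to0$ yields $L_W\to{\cal R}_0^W-1=L_W^*$. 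Thus every interior trajectory converges to $(0,L_W^*)$; trajectories on $\{L_U=0\}\setminus\{(0,0)\}$ converge there by the logistic flow, and those on $\{L_W=0,\,L_U>0\}$ enter the interior instantly.

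The step I expect to be the main obstacle is this last convergence of the $L_W$-equation: it is only \emph{asymptotically} autonomous, so one must genuinely exclude a drift back towards the origin, which is precisely what the lower logistic comparison --- crucially exploiting ${\cal R}_0^W>1$ to trap $L_W$ away from $0$ --- accomplishes. The triangular (cascade) structure in the coordinates $(\phi,L_W)$ is the simplification that makes everything tractable, and $\phi\to0$ is the one genuinely new ingredient supplied by the hypothesis $K>K^*$, via the absence of a positive root of $g$ established in Theorem \ref{th9}. Equivalently, in LaSalle form, on $\{L_W>0\}$ the ratio $\phi$ is a Lyapunov function with $\dot\phi=\phi g(\phi)\le0$ and $\{\dot\phi=0\}=\{L_U=0\}$, so every $\omega$-limit set lies in the invariant axis $\{L_U=0\}$; the logistic flow on that axis together with the lower bound keeping $L_W$ away from $0$ then singles out $(0,L_W^*)$.
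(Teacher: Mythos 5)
Your proof is correct, and while its algebraic core coincides with the paper's, the way you conclude is genuinely different. The paper leaves the equilibria and local stability claims undetailed (referring back to the analysis of the uncontrolled and controlled systems), and for the convergence part uses the Lyapunov function $V=\frac{L_U}{L_U+L_W}$ on the invariant set $\Rset_+^2\setminus\{(0,0)\}$: it shows $\dot V\le 0$ with equality exactly on $\{L_U=0\}$ --- the sign resting on a factorization of the very quadratic that underlies \eqref{eq28}, whose nonnegative roots exist precisely when $K\le K^*$ --- and then concludes in one line by LaSalle's invariance principle. Your variable $\phi=L_U/L_W$ is a monotone reparametrization of $V$ (indeed $V=\phi/(1+\phi)$), and your statement that $g$ has no zero on $[0,\infty)$ is the same no-positive-root fact; so the Lyapunov-type ingredient is essentially identical. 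What differs is the finish: you exploit the cascade structure $\dot\phi=\phi\,g(\phi)$, $\dot L_W=L_W\bigl[({\cal R}_0^W-1+K\phi)-(1+\phi)L_W\bigr]$ to get exponential decay of $\phi$, and then control the asymptotically autonomous $L_W$-equation by sandwiching it between two logistic fields. This buys something concrete that the paper's LaSalle invocation leaves implicit: LaSalle only confines the $\omega$-limit set to the invariant axis $\{L_U=0\}$, which still contains \emph{both} equilibria, so one must separately rule out convergence to $(0,0)$; your lower logistic comparison, crucially using ${\cal R}_0^W>1$, does exactly that, and moreover yields a convergence rate. One small slip worth fixing: at an equilibrium with $L_U\neq 0$, it is equation \eqref{eq78b} (which reduces to $0=KL_U$ when $L_W=0$), not division of \eqref{eq78a} by $L_U$, that forces $L_W\neq 0$; the conclusion is unaffected.
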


\begin{proof}
The first two assertions, proved in the same way than the similar properties were established for system \eqref{eq10}, are not detailed.
The third point is proved by considering the following candidate Lyapunov function:
\begin{equation*}
V(L_U,L_W) := \frac{L_U}{L_U+L_W}\ ,
\end{equation*}
defined in the invariant set $\Rset_+^2\setminus\{(0,0)\}$.
Letting $\dot V(L_U,L_W)$ denote the value of the derivative with respect to time of $V(L_U(t),L_W(t))$ along the trajectories of \eqref{eq78}, we have
\begin{eqnarray*}
\dot V
& = &
\nonumber
\frac{\dot L_UL_W-\dot L_WL_U}{(L_U+L_W)^2}\\
& = &
\nonumber
\frac{1}{(L_U+L_W)^2}
\left(
{\cal R}_0^U \frac{\gamma_WL_U}{\gamma_WL_U+ \gamma_UL_W}L_UL_W
-{\cal R}_0^WL_UL_W+KL_U^2
\right)\\
& = &
- \frac{L_U}{(L_U+L_W)^2(\gamma_WL_U+ \gamma_UL_W)}
\left(
{\cal R}_0^W\gamma_UL_W^2
\right.\\
& &
\hspace{2.9cm}
+ \left. \left(
{\cal R}_0^W\gamma_W+K\gamma_U-{\cal R}_0^U\gamma_W
\right)L_UL_W
+K\gamma_W L_U^2
\right)\ .
\end{eqnarray*}
Now,
\begin{eqnarray*}
\nonumber
\lefteqn{{\cal R}_0^W\gamma_UL_W^2+\left(
{\cal R}_0^W\gamma_W+K\gamma_U-{\cal R}_0^U\gamma_W
\right)L_UL_W
+K\gamma_W L_U^2}\\
& = &
\nonumber
\left(
\sqrt{{\cal R}_0^W\gamma_U} L_W - \sqrt{K\gamma_W}L_U
\right)^2\\
& &
+\left(
2\sqrt{K{\cal R}_0^W\gamma_U\gamma_W}+{\cal R}_0^W\gamma_W+K\gamma_U-{\cal R}_0^U\gamma_W
\right)L_UL_W\ .
\end{eqnarray*}
One verifies that the following factorization holds:
\begin{eqnarray}
\lefteqn{\gamma_U K
+ 2\sqrt{{\cal R}_0^W\gamma_U\gamma_W} \sqrt{K}+{\cal R}_0^W\gamma_W-{\cal R}_0^U\gamma_W}
\nonumber\\
& = &
\nonumber
\gamma_U \left(
\sqrt{K} + \frac{\sqrt{\gamma_W}}{\sqrt{\gamma_U}} \left(
\sqrt{{\cal R}_0^W} - \sqrt{{\cal R}_0^U}
\right)
\right)
\left(
\sqrt{K} + \frac{\sqrt{\gamma_W}}{\sqrt{\gamma_U}}  \left(
\sqrt{{\cal R}_0^W} + \sqrt{{\cal R}_0^U}
\right)
\right)\\
& = &
\label{eq67}
\gamma_U \left(
\sqrt{K} -\sqrt{K^*}
\right)
\left(
\sqrt{K} + \frac{\sqrt{\gamma_W}}{\sqrt{\gamma_U}}  \left(
\sqrt{{\cal R}_0^W} + \sqrt{{\cal R}_0^U}
\right)
\right)\ ,
\end{eqnarray}
where $K^*$ is the critical gain given in \eqref{eq12}.
Obviously the quantity in the right-hand side of \eqref{eq67} is positive whenever $K>K^*$.
One thus has
\begin{equation*}
\dot V(L_U,L_W) \leq 0
\end{equation*}
for any $(L_U,L_W)\in\Rset_+^2\setminus\{(0,0)\}$, with equality if and only if $L_U=0$.
LaSalle's Invariance Principle \citep{La-Salle:1976aa} is then used to conclude. 
\end{proof}

While the Lyapunov function $\frac{L_U}{L_U+L_W}$ used in the previous proof is quite appealing, it has not proved possible to extend this idea to the complete controlled system \eqref{eq20}.

\subsubsection{A monotone control system perspective}
\label{se332}
The second method now explored is an attempt to apply the results on {\em monotone control systems}, as worked out in particular by \cite{Angeli:2003aa}, see also the works by \cite{Gouze:1988aa,Cosner:1997aa,Enciso:2006ab,Enciso:2014aa}.
The principle of this approach consists in decomposing the system under study as a {\em monotone input-output system with feedback}.
Given the fact (see Theorem \ref{th55}) that the uncontrolled system is monotone, a most natural way to do this is to write system \eqref{eq20} as
\begin{subequations}
\label{eq66}
\begin{gather}
\label{eq66a}
\dot x = f(x) +KBu,\qquad y = L_U = e\t x\\
\label{eq66b}
u = y
\end{gather}
\end{subequations}
Arguing as in Theorem \ref{th55}, one can establish that the input-to-state map $u\mapsto x$ given by \eqref{eq66a} is monotone when the state space is endowed with the ordering   $\leq_{\cK}$;
while the state-to-output map $x \mapsto y = e\t x = L_U$ is anti-monotone.
We are thus in the configuration of a so-called {\em monotone system with negative feedback}.

In such a case, the study of asymptotics of the system obtained when closing the loop by the unitary feedback \eqref{eq66b} can be done by introducing {\em static characteristics} \citep{Angeli:2003aa,Enciso:2014aa}.
By definition, when it exists, the {\em input-state characteristic} $k_X$ associates to any constant input $\bar u$ the corresponding value $k_X(\bar u)$ of the unique globally asymptotically stable equilibrium; and the {\em input-output characteristic} $k$, obtained by composing $k_X$ with the state-to-output map, associates to $\bar u$ the corresponding output value $k(\bar u)$.
Notice that for a monotone system with negative feedback (as it is the case here), the map $k$ is {\em non-increasing}.
The inspiring results demonstrated in the references above establish general conditions under which {\em existence and stability of fixed points of the input-output characteristic} $k$ permit to deduce the convergence of every (or almost every) trajectory of the closed-loop system towards a locally asymptotically stable equilibrium; and that these equilibria are in one-to-one correspondence with the stable fixed points (through the state-to-output map).

\begin{figure}[ht]
\begin{center}
\mbox{\includegraphics[scale=0.5]{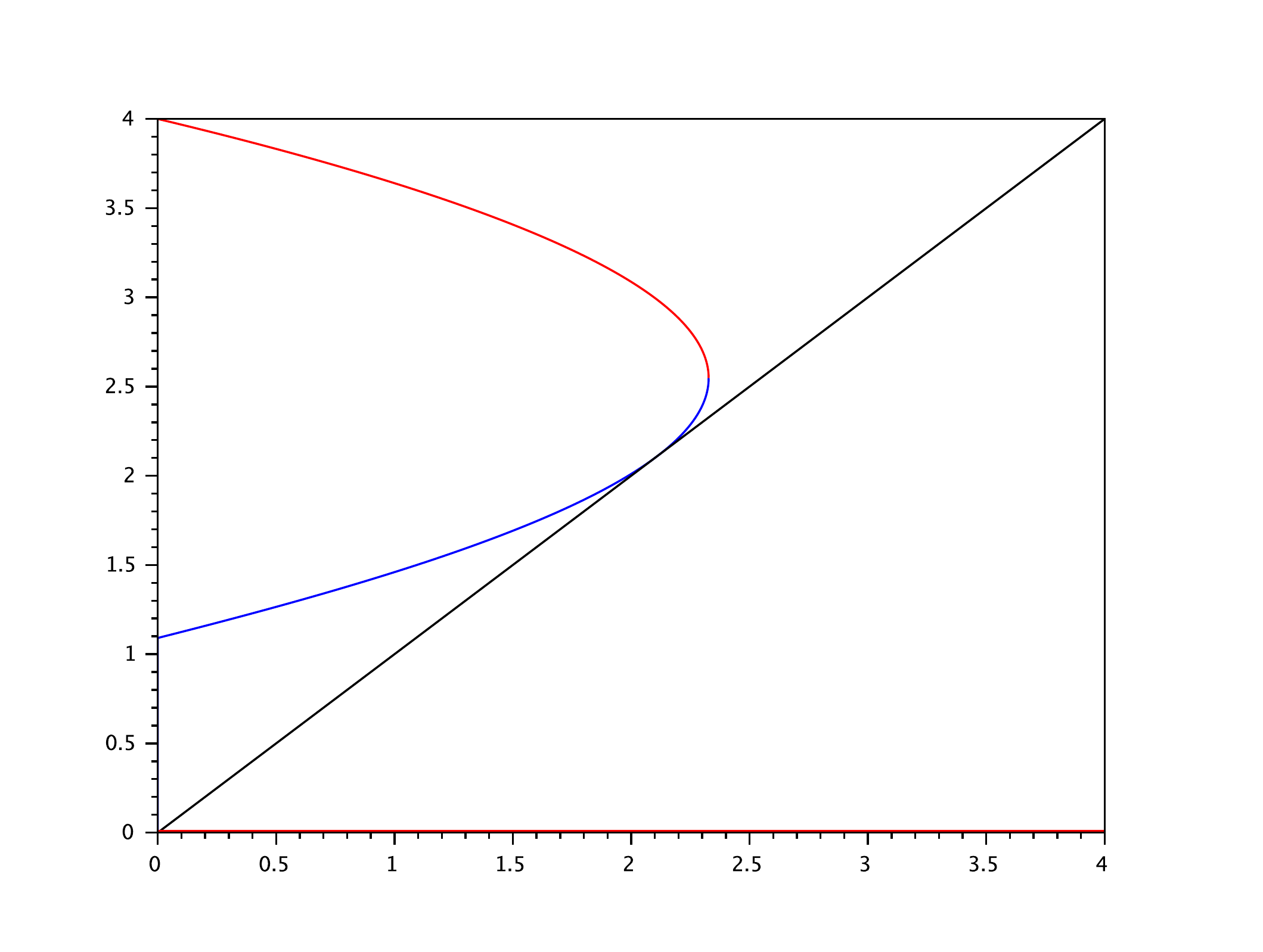}}
\caption{The multivalued input-output characteristic $\bar u \mapsto k(\bar u)$ corresponding to input-output system \eqref{eq66a} (in red) and the diagonal $\bar u \mapsto \bar u$ (in black), in the critical case $K^*=K$.
See text for explanations}
\label{fi1}
\end{center}
\end{figure}

However, things become immediately complicated in the case of system \eqref{eq66}: constant input $\bar u=0$ leads to the uncontrolled system \eqref{eq6}, which has been proved to possess {\em two} locally asymptotically stable equilibria (and two unstable ones).
In such a case, one can consider {\em multivalued input-state and input-output characteristics}, as made by \cite{Malisoff:2005aa}, or \cite{Gedeon:2009aa}.
The corresponding input-output characteristic is drawn in {\color{red} red} in Figure \ref{fi1}  (for the parameter values given in Section \ref{se4} below) and for the corresponding critical value $K^*$ of $K$.

As seen in the figure, the input-output characteristic has basically two branches (both drawn in red).
The first one merges with the horizontal axis: it corresponds to a branch of equilibria with null value of the output $y=L_U$ that departs from $x_{0,W}$ for $\bar u=0$.
The second one is a decreasing curve, defined for values of $\bar u$ ranging from zero to a value close to 2.32: it corresponds to the output value of a branch of equilibria departing from $x_{U,0}$.
The {\color{blue} blue} curve, which does {\em not} pertain to the input-output characteristic, indicates the output values of a branch of unstable equilibria originating from $x_{U,W}$ and that vanishes together with the upper curve.
The diagonal  line (that determines the fixed points of $k$) is also shown.
It is tangent to the blue curve, due to the fact that $K=K^*$ here.
For smaller values of $K$, the diagonal intersects twice the two upper branches; while for larger values of $K$ the only intersection between the input-output characteristic and the diagonal is the origin.

For $K>K^*$, the complete infestation equilibrium is therefore the only fixed point of the multivalued map $k$.
But the iterative sequences $\bar u_{k+1} = k(\bar u_k)$ do {\em not} converge systematically towards this point.
In fact, the only information that can be deduced from the results applicable to cases of multivalued input-output characteristic \citep{Malisoff:2005aa,Gedeon:2009aa}, is that all trajectories are bounded, and that the output $y = L_U$ fulfills the following inequalities:
\begin{equation*}
0 \leq \limsup_{t\to +\infty} y(t) \leq k^{\sup} \left(
\liminf_{t\to +\infty} y(t)
\right)\ .
\end{equation*}
Here $k^{\sup}$ denotes the discontinuous function whose curve is equal to the upper branch until it vanishes, and merges afterwards with the horizontal axis.

As a conclusion, the decomposition \eqref{eq66}, that seemed  a natural framework to analyze the behavior of the controlled system \eqref{eq20} immediately fails to produce a global vision of the asymptotic behavior.

\subsubsection{ Monotonicity revisited and proof of the global stability}
\label{se333}

We now concentrate without supplementary detour on the proof of Theorem \ref{th12}.
The principle consists in working on an alternative decomposition of system \eqref{eq20}, different from \eqref{eq66}.
Define first
\begin{equation*}
|z|_- := \begin{cases}
z &\text{ if } z \leq 0\\
0 &\text{ otherwise}
\end{cases}\qquad\text{ and }\qquad
|z|_+: = \begin{cases}
z &\text{ if } z \geq 0\\
0 &\text{ otherwise}
\end{cases}
\end{equation*}
Clearly, one has
\begin{equation}
\label{eq68}
z=|z|_-+|z|_+,\qquad z\in\Rset\,.
\end{equation}
We introduce the following decomposition that will show convenient in the proof of Theorem \ref{th12}.
\begin{subequations}
\label{eq013}
\begin{gather}
\label{eq013a}
\dot L_U = \gamma_U{\cal R}_0^U \frac{ A_U}{ A_U+ A_W} A_U - (1+ L_W+ L_U) L_U \\
\label{eq013b}
\dot A_U =  L_U -\gamma_U  A_U\\
\label{eq013c}
\dot L_W = \gamma_W {\cal R}_0^W  A_W - (1+ L_W) L_W
+ |K-L_W|_-L_U + K  u \\
\label{eq013d}
\dot A_W = L_W -\gamma_W  A_W\\
\label{eq013e}
y = \left|1-\frac{L_W}{K}\right|_+L_U
\end{gather}
\end{subequations}
As a matter of fact, using property \eqref{eq68}, one sees easily that the closing of the input-output link \eqref{eq013} by $u=y$ indeed yields system \eqref{eq20}.

Next, we state and prove Lemmas \ref{le11}, \ref{le13} and \ref{le12}, that will be used in the proof of the main result, Theorem \ref{th12}.

\begin{lemm}
\label{le11}
For any integrable $u$  taking on nonnegative values, the set $\Rset_+^4$ is positively invariant by \eqref{eq013}.
\end{lemm}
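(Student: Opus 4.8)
The plan is to establish positive invariance by verifying the sub-tangentiality (Nagumo-type) condition on the boundary of $\Rset_+^4$, exactly as was done for the unforced field in the proof of Theorem \ref{th1}: on each coordinate face $\{x_i=0\}\cap\Rset_+^4$ I would check that the $i$-th component of the right-hand side of \eqref{eq013} is nonnegative, so that the flow never points strictly outward across a face.

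First I would dispatch the four faces. On $\{A_U=0\}$, equation \eqref{eq013b} gives $\dot A_U = L_U \ge 0$, and on $\{A_W=0\}$, equation \eqref{eq013d} gives $\dot A_W = L_W \ge 0$. On $\{L_U=0\}$ the quadratic loss term in \eqref{eq013a} vanishes and $\dot L_U = \gamma_U{\cal R}_0^U\frac{A_U}{A_U+A_W}A_U \ge 0$, using $0\le\frac{A_U}{A_U+A_W}\le 1$ together with the continuous extension at $A_U+A_W=0$. The decisive face is $\{L_W=0\}$, governed by \eqref{eq013c}: here I would use that the feedback gain $K$ is positive, so that $|K-L_W|_- = |K|_- = 0$ on that face. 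The potentially sign-indefinite feedback term $|K-L_W|_-L_U$ therefore disappears exactly where it matters, leaving $\dot L_W = \gamma_W{\cal R}_0^W A_W + Ku \ge 0$, because $A_W\ge 0$ and the input $u$ is assumed nonnegative.

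The only genuine subtlety is the low regularity of the forcing: $u$ is merely integrable, so \eqref{eq013} is a Carath\'eodory system rather than a classical ODE, and the Nagumo argument must be adapted. I would handle this by approximation: take nonnegative continuous (or piecewise-constant) inputs $u_n\to u$ in $L^1$, for which \eqref{eq013} has a continuous-in-time right-hand side and the tangency condition yields invariance of the closed set $\Rset_+^4$ by the standard theorem (augmenting the state with the time variable to absorb the explicit $t$-dependence). Since $u$ enters affinely, continuous dependence of solutions on the forcing in $L^1$ gives $x_n\to x$ uniformly on compact intervals, and closedness of $\Rset_+^4$ passes the property to the limit. Alternatively, a direct differential-inequality argument works: on $\Rset_+^4$ each coordinate satisfies $\dot x_i \ge -c(t)x_i$ for a locally bounded $c$, the nonnegativity of $u$ and the vanishing of $|K|_-$ ensuring no obstruction in the $L_W$ equation, so the first coordinate to reach zero has nonnegative time-derivative there and cannot cross.

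The main obstacle I anticipate is precisely bookkeeping the sign of the feedback term $|K-L_W|_-L_U$ in \eqref{eq013c}, which is negative when $L_W>K$ and could a priori spoil positivity; the resolution is the observation above that it is identically zero on the face $L_W=0$, so it never threatens invariance, while the measurable input only ever contributes the nonnegative quantity $Ku$.
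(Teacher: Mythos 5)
Your proposal is correct and follows essentially the same route as the paper's proof: the decisive observation in both is that $|K-L_W|_-L_U$ vanishes on the face $\{L_W=0\}$ (indeed whenever $0\leq L_W<K$, since $K>0$), so that $\dot L_W=\gamma_W{\cal R}_0^W A_W+Ku\geq 0$ there, while the other three faces are handled by the obvious sign checks on \eqref{eq013a}, \eqref{eq013b} and \eqref{eq013d}. Your extra care with the Carath\'eodory regularity of the merely integrable input $u$ (approximation in $L^1$ or a differential inequality) is a refinement the paper passes over in silence, but it does not alter the substance of the argument.
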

\begin{proof}[Proof of Lemma \ref{le11}]
The key point is that $|K-L_W|_-L_U=0$ when $0\leq L_W <K$.
Therefore, near the border of $\Rset_+^4$ where $L_W=0$, the system \eqref{eq013} behaves locally as $\dot L_W = \gamma_W {\cal R}_0^W  A_W - (1+ L_W) L_W + K  u$.
The fact that $\dot L_W\geq 0$ whenever $L_W=0$ then forbids escape from the set $\Rset_+^4$ by this side.
The same happens for the other three variables: their derivatives are nonnegative at the points where they vanish.
Hence, the trajectories can neither escape by the other sides.
This establishes the positive invariance of $\Rset_+^4$ and achieves the proof of Lemma \ref{le11}.
\end{proof}

\begin{lemm}
\label{le13}
The input-output system \eqref{eq013} is monotone with negative feedback, when the state space is endowed with the order $\geq_{\cK}$ defined in Theorem \ref{th55}.
\end{lemm}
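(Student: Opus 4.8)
The plan is to verify the two halves of the definition of a monotone input-output system with negative feedback: first, that the input-to-state map defined by the state equations \eqref{eq013a}--\eqref{eq013d} is monotone with respect to the order $\geq_\cK$ on the state space (and the natural order on the scalar nonnegative input $u$); and second, that the output map \eqref{eq013e}, namely $h(x) := \left|1-\frac{L_W}{K}\right|_+ L_U$, is \emph{anti}-monotone from $(\Rset_+^4,\geq_\cK)$ to $\Rset$. Together these two facts constitute exactly what is meant by ``monotone with negative feedback,'' in the same sense as in the discussion of the decomposition \eqref{eq66}.

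For the first part, I would denote the right-hand side of \eqref{eq013a}--\eqref{eq013d} by $g(x,u)$ and observe that its first, second and fourth components coincide identically with $f_1,f_2,f_4$ from \eqref{eq50}; only the third component is altered, the competition term $-(1+L_W+L_U)L_W$ of \eqref{eq10c} being replaced by $-(1+L_W)L_W+|K-L_W|_-L_U$. Consequently the off-diagonal structure of the Jacobian is identical to that already verified in the proof of Theorem \ref{th55}, with the single exception of the entry $\frac{\partial g_3}{\partial L_U}=|K-L_W|_-$. Since $|K-L_W|_-\leq 0$ by definition, this entry is nonpositive, which is precisely the sign demanded by the Kamke condition for the cone $\cK$ (signs $(-,-,+,+)$) on a cross-pair between the index groups $\{1,2\}$ and $\{3,4\}$. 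The remaining row-three entries $\frac{\partial g_3}{\partial A_U}=0$ and $\frac{\partial g_3}{\partial A_W}=\gamma_W{\cal R}_0^W\geq 0$ carry the correct signs as well, so the state dynamics is cooperative for $\geq_\cK$. Monotone dependence on the input then follows from $\frac{\partial g_3}{\partial u}=K\geq 0$ together with the fact that the third component is the one kept in the ``positive'' direction of $\cK$, so that increasing $u$ increases the state in the $\cK$-order. Applying Kamke's comparison theorem, exactly as in Theorem \ref{th55}, yields monotonicity of the input-to-state map.

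For the second part, I would note that $h$ depends only on the two nonnegative quantities $L_U$ and $L_W$: it is nondecreasing in $L_U$, because its coefficient $\left|1-\frac{L_W}{K}\right|_+$ is nonnegative, and nonincreasing in $L_W$, because $\left|1-\frac{L_W}{K}\right|_+$ is itself a nonincreasing function of $L_W$. Now $x\geq_\cK x'$ means, componentwise, $L_U\leq L_U'$ and $L_W\geq L_W'$ (the components $A_U,A_W$ do not enter $h$). Both of these inequalities drive $h$ downward, whence $h(x)\leq h(x')$; that is, $h$ reverses the order and is anti-monotone. This is the crux of the matter, and it is exactly what the redistribution built into \eqref{eq013}---splitting the closed-loop term $KL_U=|K-L_W|_-L_U+|K-L_W|_+L_U$ via \eqref{eq68} between the feedforward dynamics and the input channel---was engineered to achieve.

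The main technical obstacle is the lack of differentiability of $|K-L_W|_-$ (and of the output coefficient) along the hyperplane $L_W=K$. I would deal with this by invoking the nonsmooth, derivative-free form of the quasimonotone condition: both $g(\cdot,u)$ and $h$ are globally Lipschitz, the relevant sign conditions hold on each of the open half-spaces $\{L_W<K\}$ and $\{L_W>K\}$, and they pass to the interface by continuity, so that both Kamke's comparison theorem and the order-reversal of $h$ remain valid across the kink. The only other delicate point, the singularity of $f_1$ at $A_U=A_W=0$, is inherited unchanged from \eqref{eq6} and is resolved by continuous extension precisely as in Theorems \ref{th1} and \ref{th55}.
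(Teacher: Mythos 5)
Your proof is correct and follows essentially the same route as the paper: you verify the Kamke sign conditions for the state equations (the only change relative to Theorem \ref{th55} being the entry $\partial g_3/\partial L_U=|K-L_W|_-\leq 0$ and the input channel $K\geq 0$ entering the third, ``positive,'' component of $\cK$), and you observe that the output map is nondecreasing in $L_U$ and nonincreasing in $L_W$, hence order-reversing for $\geq_\cK$. The paper phrases the same checks directly as derivative-free monotone dependencies of the right-hand side of \eqref{eq013c} (increasing in $A_W$ and $u$, non-increasing in $L_U$), which renders your patch for the kink at $L_W=K$ unnecessary, but this is a difference of presentation, not of substance.
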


\begin{proof}[Proof of Lemma \ref{le13}]
The right-hand sides of \eqref{eq013a}, \eqref{eq013b} and \eqref{eq013d} have been studied in Theorem \ref{th55}.
The right-hand side of \eqref{eq013c} is clearly increasing with respect to $A_W$ and $u$, and non-increasing with respect to $L_U$.
So the input-to-state map is monotone when the state space is endowed with the order defined in  Theorem \ref{th55}.

On the other hand, the state-to-output map defined by \eqref{eq013e} is non-increasing with respect to $L_W$, and non-decreasing with respect to $L_U$.
Therefore, it is anti-monotone with respect to the ordering used in the state space.
This achieves the proof of Lemma \ref{le13}.
\end{proof}

\begin{lemm}
\label{le12}
For any constant nonnegative input, system \eqref{eq013a}--\eqref{eq013d} possesses a unique LAS equilibrium.
The latter yields null value of $L_U$.

Moreover, the solution of the input-output system \eqref{eq013a}--\eqref{eq013d} converges towards the corresponding equilibrium when time goes to infinity (and in particular the output $L_U$ converges to zero), except possibly if $\bar u=0$ and $L_W(0)=0$, $A_W(0)=0$.
\end{lemm}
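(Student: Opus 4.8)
The plan is to treat $\bar u\ge 0$ as a frozen parameter and to study \eqref{eq013a}--\eqref{eq013d} as a bounded monotone system, using the cone $\cK$ of \eqref{coneK} and the monotonicity already recorded in Lemma~\ref{le13}.

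First I would determine the equilibria and the stability of the one with $L_U=0$. Setting $L_U=0$ forces $A_U=0$ through \eqref{eq013b}, after which \eqref{eq013c}--\eqref{eq013d} collapse, via $A_W=L_W/\gamma_W$, to $(\mathcal R_0^W-1-L_W)L_W+K\bar u=0$; this has a single nonnegative root $\bar L_W$, giving one equilibrium $\bar x:=(0,0,\bar L_W,\bar L_W/\gamma_W)$ with null $L_U$ (reducing to $x_{0,W}$ when $\bar u=0$). The decisive point for its stability is that the recruitment term $\gamma_U\mathcal R_0^U\frac{A_U}{A_U+A_W}A_U$ is quadratic in $(L_U,A_U)$ near $A_U=0$, so it contributes nothing to the linearization and the $A_W$-to-$\dot L_U$ coupling vanishes there as well. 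Hence the Jacobian at $\bar x$ is block lower-triangular, with diagonal blocks $\left(\begin{smallmatrix}-(1+\bar L_W)&0\\ 1&-\gamma_U\end{smallmatrix}\right)$ and $\left(\begin{smallmatrix}-(1+2\bar L_W)&\gamma_W\mathcal R_0^W\\ 1&-\gamma_W\end{smallmatrix}\right)$. Both are Hurwitz (the second because $\bar L_W\ge\mathcal R_0^W-1>\tfrac12(\mathcal R_0^W-1)$), so $\bar x$ is LAS; this is the mechanism I would use to prove the first assertion.

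Next I would obtain the convergence statement from monotone dynamical systems theory. By Lemma~\ref{le13} the frozen system is cooperative for $\le_\cK$ and, exactly as in the proof of Theorem~\ref{th55}, strongly monotone away from the two invariant faces $\{A_U=0\}$ and $\{A_W=0\}$; boundedness follows as in Theorem~\ref{th7}. Hirsch's generic convergence theorem \cite[Theorem~7.8]{Hirsch:1988aa} then yields convergence to an equilibrium for almost every initial condition, and I would pin the limit down by classifying the remaining equilibria. When $\bar u=0$ the point $x_{U,0}$ persists, and its $(L_W,A_W)$-block is $\left(\begin{smallmatrix}-1&\gamma_W\mathcal R_0^W\\ 1&-\gamma_W\end{smallmatrix}\right)$ with determinant $\gamma_W(1-\mathcal R_0^W)<0$: a saddle, whose unstable eigendirection points into the interior of $\Rset_+^4$. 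Consequently, inside $\Rset_+^4$ the only trajectories captured by $x_{U,0}$ are those on the invariant face $\{L_W=A_W=0\}$, which is precisely the exceptional set named in the statement; for any other initial datum the infected subpopulation is nonzero, and since $\mathcal R_0^W>1$ (and the constant term $K\bar u$ keeps $L_W$ off zero when $\bar u>0$) it is persistent, i.e. $\liminf_{t}A_W>0$.

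Finally, to drive the uninfected compartment to extinction I would sandwich the trajectory. Any non-exceptional solution lies $\le_\cK$ a super-solution started in the invariant set $\{L_U=A_U=0\}$, on which the flow reduces to the planar cooperative system $\dot L_W=\gamma_W\mathcal R_0^W A_W-(1+L_W)L_W+K\bar u$, $\dot A_W=L_W-\gamma_W A_W$, whose positive equilibrium $(\bar L_W,\bar L_W/\gamma_W)$ is globally attractive in the interior; monotone comparison then bounds the $\omega$-limit set from the $\cK$-above by $\bar x$. The delicate part, and the step I expect to be the main obstacle, is the matching lower control: ruling out that a non-exceptional trajectory is trapped by a coexistence equilibrium with $L_U>0$. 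Such equilibria satisfy $\mathcal R_0^U\frac{\theta}{1+\theta}=\mathcal R_0^W+\frac{\gamma_U\bar L_W}{\gamma_W}\theta$ with $\theta=A_U/A_W$, and the argument that they cannot attract an interior set reduces to showing that the persistence bound $\liminf_t A_W>0$ keeps the effective recruitment factor $\frac{A_U}{A_U+A_W}$ bounded away from $1$, pushing the uninfected net reproduction below replacement and forcing $L_U\to0$. Carrying this out through the monotone comparison---rather than a Lyapunov function, which the reduced-system computation preceding this lemma shows does not survive the passage to four dimensions---is the crux, after which convergence of the whole trajectory to $\bar x$ follows, and the output $L_U\to0$ in particular.
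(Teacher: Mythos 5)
Your setup coincides with the paper's (freeze $\bar u$, exploit the $\cK$-monotonicity of Lemma~\ref{le13}), your linearization at the $L_U=0$ equilibrium is correct, and your observation that $x_{U,0}$ survives as an equilibrium of the frozen system when $\bar u=0$ (now a saddle, since the $-L_UL_W$ coupling is absent from \eqref{eq013c}) is in fact more careful than the paper's own proof, which lists only $x_{0,0}$ and $x_{0,W}$ in that case. But what you present is not a proof: the step you yourself call ``the crux'' --- that no non-exceptional trajectory can be trapped by an equilibrium with $L_U>0$, i.e.\ that $L_U\to 0$ --- \emph{is} the content of the lemma, and it is left undone. None of the tools you actually deploy can supply it: the comparison with a supersolution started on the face $\{L_U=A_U=0\}$ only yields $\limsup_t L_W\le \bar L_W$ and $\limsup_t A_W\le \bar A_W$, because the cone \eqref{coneK} gives no upper control whatsoever on $(L_U,A_U)$; and Hirsch's generic convergence theorem gives convergence for \emph{almost every} initial condition, whereas the lemma must hold for \emph{every} non-exceptional initial condition (it is applied in the proof of Theorem~\ref{th12} to specific trajectories). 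The paper fills exactly this hole by classifying \emph{all} equilibria of the frozen system along the two regions built into the decomposition \eqref{eq013}: where $L_W\ge K$, the terms $|K-L_W|_-L_U+K\bar u$ recombine into an effective proportional gain $K'\ge K>K^*$ and the computation of Theorem~\ref{th9} forces $L_U=0$; where $L_W\le K$, the $(L_W,A_W)$-subsystem decouples and gives \eqref{eq015}; convergence of all trajectories then follows from the strongly order-preserving property via \cite[Theorems 2.2.1 and 2.3.1]{Smith:1995aa}, not from genericity. (Your claim that only the invariant face is captured by the saddle $x_{U,0}$ also needs an argument: its stable manifold is three-dimensional, and one must check, e.g.\ through the sign pattern of the stable eigenvector of the $(L_W,A_W)$-block, that it meets $\Rset_+^4$ only along the face.)

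Moreover, the route you sketch for the crux cannot work as stated. A recruitment factor ``bounded away from $1$'' by an unquantified margin implies nothing about sub-replacement reproduction when ${\cal R}_0^U$ is large; the question is intrinsically quantitative. Indeed, your own coexistence equation ${\cal R}_0^U\frac{\theta}{1+\theta}={\cal R}_0^W+\frac{\gamma_U\bar L_W}{\gamma_W}\theta$ is \eqref{eq28} with $K$ replaced by $\bar L_W$, so by the computation in the proof of Theorem~\ref{th9} it admits positive roots exactly when $\bar L_W\le K^*$; for $\bar u=0$, where $\bar L_W={\cal R}_0^W-1$, this occurs whenever ${\cal R}_0^W-1\le K^*$, a regime compatible with \eqref{eq2} and $K>K^*$ (take ${\cal R}_0^W$ close to $1$ and ${\cal R}_0^U$ large). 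Such roots lie in the region $L_W<K$, where the $(L_W,A_W)$-dynamics is decoupled, so they are genuine equilibria of the frozen system, and the larger one is locally asymptotically stable by the cascade structure (planar cooperative dynamics with three ordered equilibria); no soft persistence argument can exclude convergence to it. Any completion of your argument must therefore use quantitatively the position of $\bar L_W$ relative to $K^*$ --- and, as your equation inadvertently reveals, this is also the one point where the paper's own proof is too quick: its assertion that the case $L_W\le K$ has a unique equilibrium ``without difficulty'' passes over precisely these solutions, so that the statement actually requires a supplementary condition of the type ${\cal R}_0^W-1>K^*$.
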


Lemma \ref{le12} does {\em not} allow to define in the usual way an (identically null) input-output characteristic:
when $\bar u>0$, the solution of the input-output system \eqref{eq013} converges to the LAS equilibrium for any initial condition, and is such that
\begin{equation*}
\lim_{t\to +\infty} L_U(t) =  0\ ;
\end{equation*}
but when $\bar u=0$, this property is only guaranteed if $(L_W(0),A_W(0))\neq (0,0)$.

However, Lemma \ref{le12} allows to define a weaker notion of characteristic called  {\em input-output quasi-cha\-racte\-ristic} \citep{Angeli:2004ac}.
Contrary to the stronger notion (presented in Section \ref{se332}), for any constant input value, the convergence to the asymptotic value does not have  to be global: convergence is only required to occur in a full measure set (i.e.\ a set whose complementary is of zero-measure).
This is exactly what happens in the present situation, as the set of those $x\in\Rset_+^4$ for which $L_W(0)=0$ and $A_W(0)=0$ is negligible.

As a consequence, one can in fact show, with the tools developed by \cite{Angeli:2004ac}, that $x_{0,W}$ is {\em almost-}globally attractive for system \eqref{eq20}, i.e.\ that it attracts all trajectories, except possibly those departing from certain zero-measure set.
The difficulty we now face is that this negligible set contains {\em a priori all the points of $\Rset_+^4$ such that $(L_W(0),A_W(0)) = (0,0)$, and in particular the equilibrium} $x_{U,0}$.
However,  it is quite natural to assume that, at the moment where the release begins, the system departs from a {\em Wolbachia}-free situation.
In order to ensure that these situations too are concerned by the convergence to $x_{0,W}$, more precise arguments are therefore needed.

We first prove Lemma \ref{le12}, and demonstrate in the sequel that, as announced in Theorem \ref{th12}, the convergence occurs for all trajectories, except the unique trajectory immobile at  $x_{0,0}$.

\begin{proof}[Proof of Lemma \ref{le12}]
One first studies the equilibria of system \eqref{eq013a}--\eqref{eq013d}, for constant inputs $u\ :\ t\mapsto u(t)\equiv \bar u$, for $\bar u\in\Rset_+$.
Clearly, the set of these equilibria is the union of two sets: the set of equilibria of \eqref{eq013a}+\eqref{eq013b}+\eqref{eq013d} and
\begin{subequations}
\label{eq014}
\begin{equation}
\label{eq014a}
\dot L_W = \gamma_W {\cal R}_0^W  A_W - (1+ L_W) L_W + K\bar u
\end{equation}
such that $K-L_W\geq 0$; and the set of equilibria of \eqref{eq013a}+\eqref{eq013b}+\eqref{eq013d} and
\begin{equation}
\label{eq014b}
\dot L_W = \gamma_W {\cal R}_0^W  A_W - (1+ L_W+L_U) L_W + KL_U + K\bar u
\end{equation}
\end{subequations}
such that $K-L_W\leq 0$.

Consider first the system \eqref{eq013a}+\eqref{eq013b}+\eqref{eq014a}+\eqref{eq013d}.
As can be seen, the latter is decoupled, since $L_U$ is not anymore present in the right-hand side of \eqref{eq014a}.
One shows without difficulty that there exists a unique equilibrium in $\Rset_+^4$, which is LAS and characterized by
\begin{equation}
\label{eq015}
L_W = \frac{1}{2} \left(
{\cal R}_0^W-1+\sqrt{({\cal R}_0^W-1)^2+4K\bar u}
\right)\ ,
\end{equation}
provided that this expression verifies $L_W\leq K$.
Another equilibrium exists, which is $x_{0,0}$ if $\bar u=0$, but which has {\em negative} value of $L_W$ if $\bar u>0$, and is therefore discarded, due to Lemma \ref{le11}.

Consider now the second case, of system \eqref{eq013a}+\eqref{eq013b}+\eqref{eq014b}+\eqref{eq013d}.
Arguing as in the proof of Theorem \ref{th9}, the only equilibria that may exist are such that $L_U=0$.
As a matter of fact, for a solution with nonzero $L_U$, a term $K\bar u$ in the right-hand side of \eqref{eq201} could be written, jointly with $KL_U$, as $K'L_U$ for some $K'\geq K>K^*$, leading therefore to $L_U=0$ and a contradiction.
Therefore, any potential equilibrium has to fulfill $L_U=0$, and the only possibility is given by \eqref{eq015} if this expression verifies $L_W\geq K$.

Putting together the two cases, one sees that:
\begin{itemize}
\item[$\star$]
there exist two equilibria, $x_{0,0}$ and $x_{0,W}$, if $\bar u =0$;
\item[$\star$]
there exists a unique equilibrium if $\bar u>0$.
\item[$\star$]
In any case, the corresponding output value is 0.
\end{itemize}

Now, for any constant input $u(t)\equiv \bar u$, system \eqref{eq013} is strongly order-preserving, just as system \eqref{eq6} was shown to be (Theorem \ref{th55}).
Then, the uniqueness of equilibrium in the case where $\bar u>0$ allows to use \cite[Theorem 2.3.1, p.\ 18]{Smith:1995aa} and to deduce that all trajectories in $\Rset_+^4$ converge to this unique equilibrium.
When $\bar u=0$, applying \cite[Theorem 2.2.1, p.\ 17]{Smith:1995aa} shows that every trajectory converges towards one of the two equilibria $x_{0,0}$ and $x_{0,W}$.
The behavior of system \eqref{eq013} in the vicinity of $x_{0,0}$ obeys the equations
\begin{equation*}
\dot L_W = \gamma_W {\cal R}_0^W  A_W - (1+ L_W) L_W,\qquad
\dot A_W = L_W -\gamma_W  A_W\ .
\end{equation*}
This system is monotone and the projection of $x_{0,W}$ attracts all trajectories, except if $L_W(0)=0$ and $A_W(0)=0$.
This achieves the proof of Lemma \ref{le12}.
\end{proof}

One is now ready to achieve the proof of Theorem \ref{th12}.
We define $y(t;x_0,u)$ the output of system \eqref{eq013a}--\eqref{eq013d} corresponding to the input signal $u$ and the initial state value $x_0$.
For any trajectory of the closed-loop system \eqref{eq20}, we will denote indifferently $u$ and $y$, in order to exploit the formalism of the input-output decomposition given in \eqref{eq013}.

First of all, recall that, due to Theorem \ref{th7}, all trajectories of \eqref{eq20} are bounded.
Therefore, for any nonnegative initial condition $x_0$, 
\begin{equation}
\label{eq73}
0
\leq
\liminf_{t\to +\infty} y(t;x_0,u)
\leq
\limsup_{t\to +\infty} y(t;x_0, u)
< +\infty
\qquad \forall x_0\in\Rset_+^4\ .
\end{equation}

Assume first
\begin{equation}
\label{eq777}
\liminf_{t\to +\infty} y(t;x_0,u) >0\ .
\end{equation}
Using monotonicity of the input-output system to compare trajectories with different inputs, one obtains from the fact that
\begin{equation*}
\forall\varepsilon>0,\exists T>0,\ t\geq T \Rightarrow
u(t) \geq \liminf_{t'\to +\infty} u(t')-\varepsilon\ ,
\end{equation*}
that
\begin{equation}
\label{eq74}
\forall\varepsilon>0,\
\limsup_{t\to +\infty} y(t;x_0, u)
\leq \limsup_{t\to +\infty} y\left(
t;x_0, \liminf_{t'\to +\infty} u(t')-\varepsilon
\right)\ .
\end{equation}
Using Lemma \ref{le12} for $\varepsilon>0$ smaller than $\displaystyle\liminf_{t'\to +\infty} y(t';x_0,u)$ yields
\begin{equation}
\label{eq75}
\limsup_{t\to +\infty} y\left(
t;x_0, \liminf_{t'\to +\infty} u(t')-\varepsilon
\right)
= 0\ .
\end{equation}
By putting together \eqref{eq73}, \eqref{eq74} and \eqref{eq75}, one gets:
\begin{equation}
\label{eq733}
0
<
\liminf_{t\to +\infty} y(t;x_0,u)
\leq
\limsup_{t\to +\infty} y(t;x_0, u)
\leq 0\ ,
\end{equation}
which is absurd.
This shows consequently that the premise \eqref{eq777} was erroneous.

We thus have
\begin{equation}
\liminf_{t\to +\infty} y(t;x_0,u) = 0
\end{equation}
for all trajectories.
Assume
\begin{equation}
\label{eq779}
(L_W(0),A_W(0))\neq (0,0)\ .
\end{equation}
 As above, one can deduce that
\begin{equation*}
0\leq \liminf_{t\to +\infty} y(t;x_0, u)
\leq \limsup_{t\to +\infty} y(t;x_0, u)
\leq \limsup_{t\to +\infty} y\left(
t;x_0, \liminf_{t'\to +\infty} u(t')
\right) = 0\ ,
\end{equation*}
and therefore that
\begin{equation}
\label{eq778}
\lim_{t\to +\infty} y(t;x_0,u) =0\ .
\end{equation}

Now  Lemma \ref{le12} permits to deduce from \eqref{eq778} and  \eqref{eq779} that
\begin{equation}
\label{eq780}
\lim_{t\to +\infty} x(t;x_0,u) =x_{0,W}\ .
\end{equation}

On the other hand, if \eqref{eq779} is false but $(L_U(0),A_U(0))\neq (0,0)$, then  it is easy to show that $(L_W(t),A_W(t))$ $\neq$ $(0,0)$ for some $t>0$ (and indeed for {\em any} $t>0$).
As a matter of fact, due to the presence of the control term (which is continuous and initially positive), $L_W$ is certainly positive on a sufficient small punctured open neighborhood of $t=0$.
This in turn yields the same property for $A_W$, due to the linearity of its evolution.
The analysis previously conducted in the case where \eqref{eq779} is true, can therefore be applied in the present case (where \eqref{eq779} is false but $(L_U(0),A_U(0))\neq (0,0)$) from a new, positive, initial time instant.
It allows to conclude similarly that \eqref{eq778} and \eqref{eq780} hold.

As a conclusion, the convergence  to $x_{0,W}$ occurs in any case, except if $(L_W(0),A_W(0)) = (L_U(0),A_U(0))$ $=$ $(0,0)$, that is except if $x(0)=x_{0,0}$.
This achieves the proof of Theorem \ref{th12}.

\section{Numerical simulations}
\label{se4}

We present some illustrative simulations, with the following realistic values:
\begin{equation*}
\gamma_U = 0.8,\qquad
\gamma_W = 1,\qquad
{\cal R}_0^U = 5,\qquad
{\cal R}_0^W = 3
\end{equation*}
Notice that the mortality is higher for the {\em Wolbachia} infected population ($\gamma_U<\gamma_W$), and its sustainability is inferior (${\cal R}_0^U>{\cal R}_0^W$).
The critical gain value can be computed and is equal to
\begin{equation*}
K^* \simeq 0.318
\end{equation*}

\begin{figure}[h!]
\begin{center}
\mbox{\includegraphics[scale=0.5]{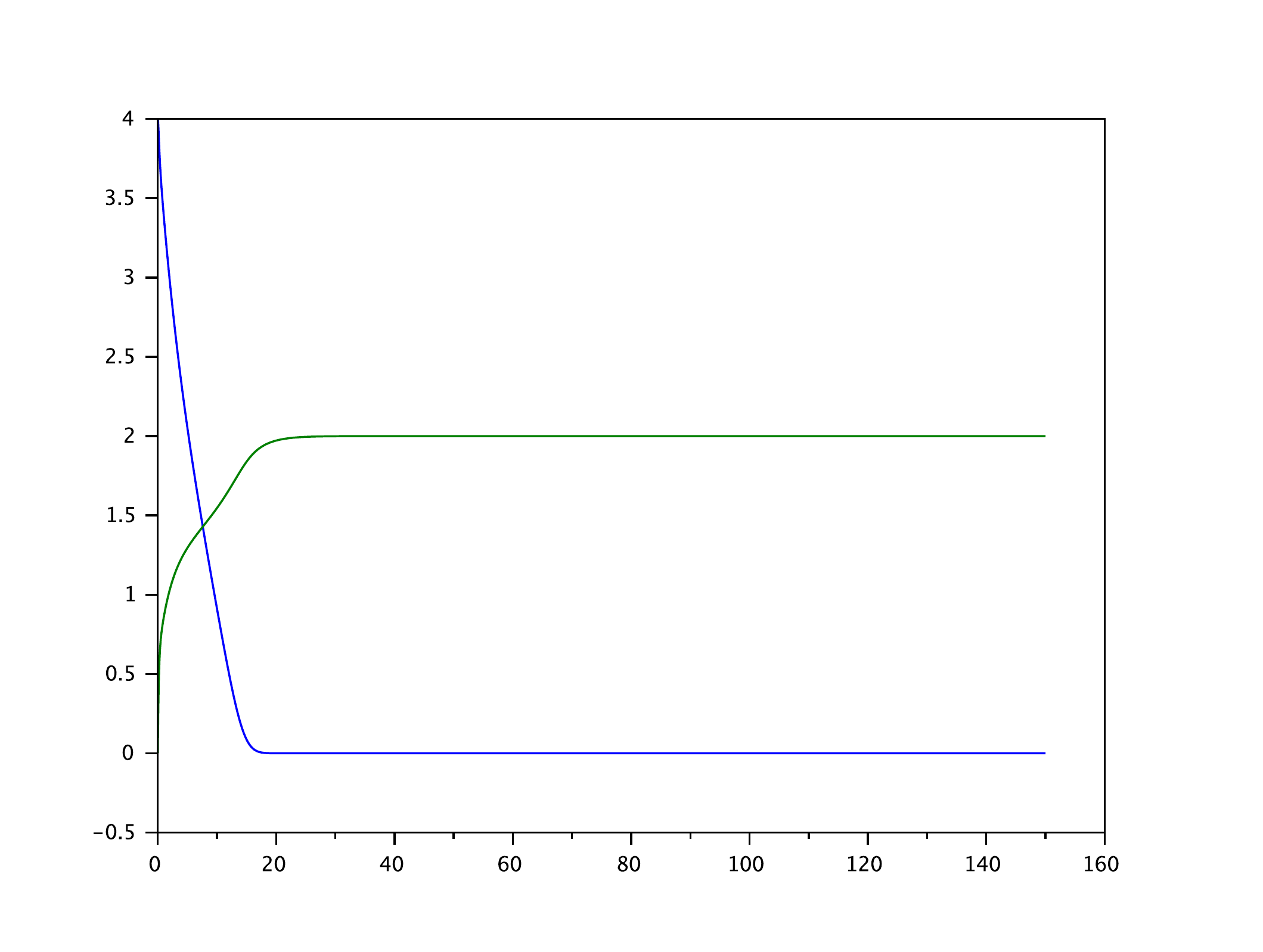}}
\caption{Evolution of {\color{blue} $L_U(t)$} and {\color{green} $L_W(t)$} as functions of time, for $K=1$}
\label{fi2}
\end{center}
\end{figure}
\begin{figure}[h!]
\begin{center}
\mbox{\includegraphics[scale=0.5]{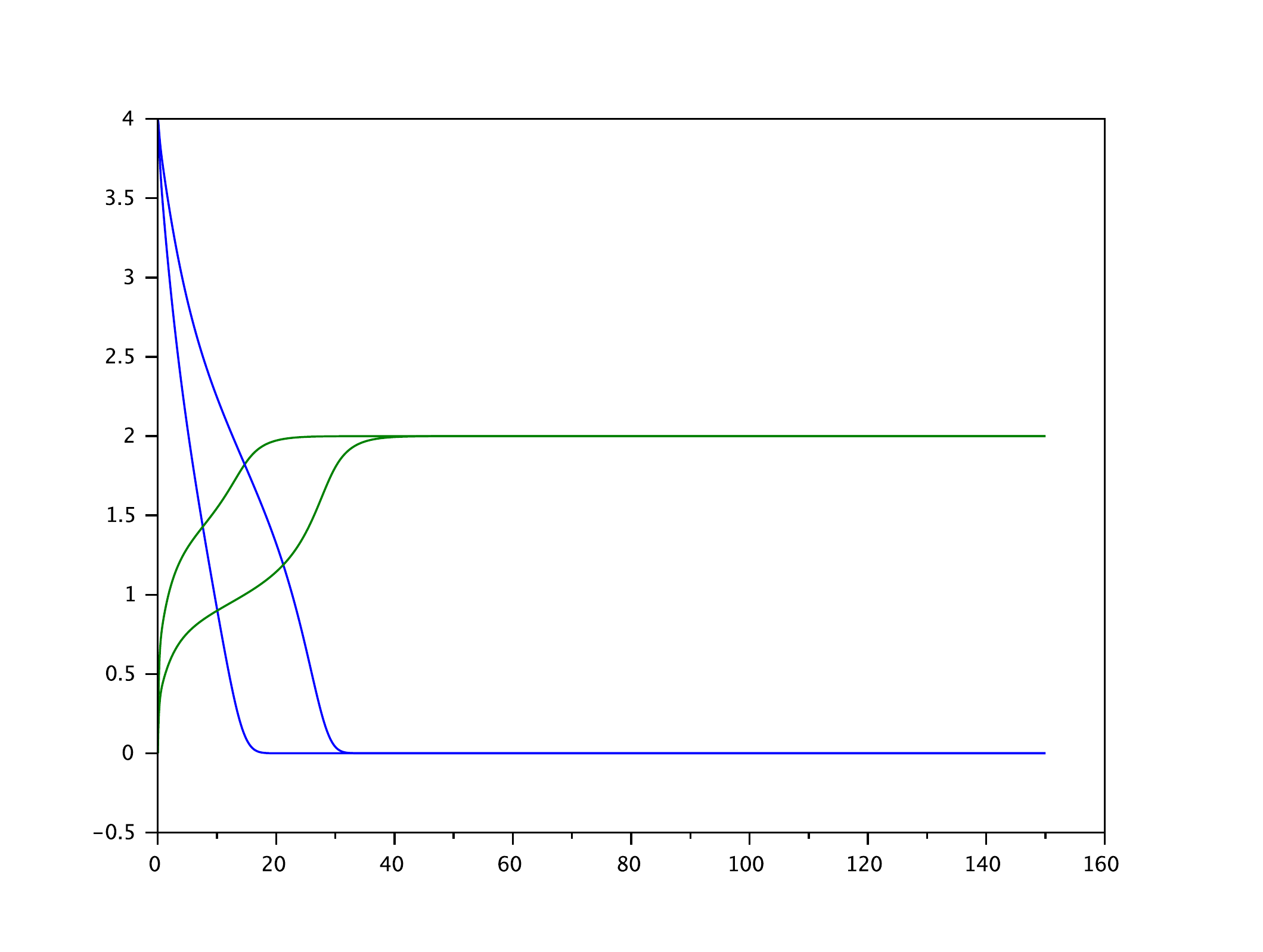}}
\caption{Evolution of {\color{blue} $L_U(t)$} and {\color{green} $L_W(t)$} as functions of time, for $K=1$ and $K=0.5$}
\label{fi3}
\end{center}
\end{figure}
\begin{figure}[h!]
\begin{center}
\mbox{\includegraphics[scale=0.5]{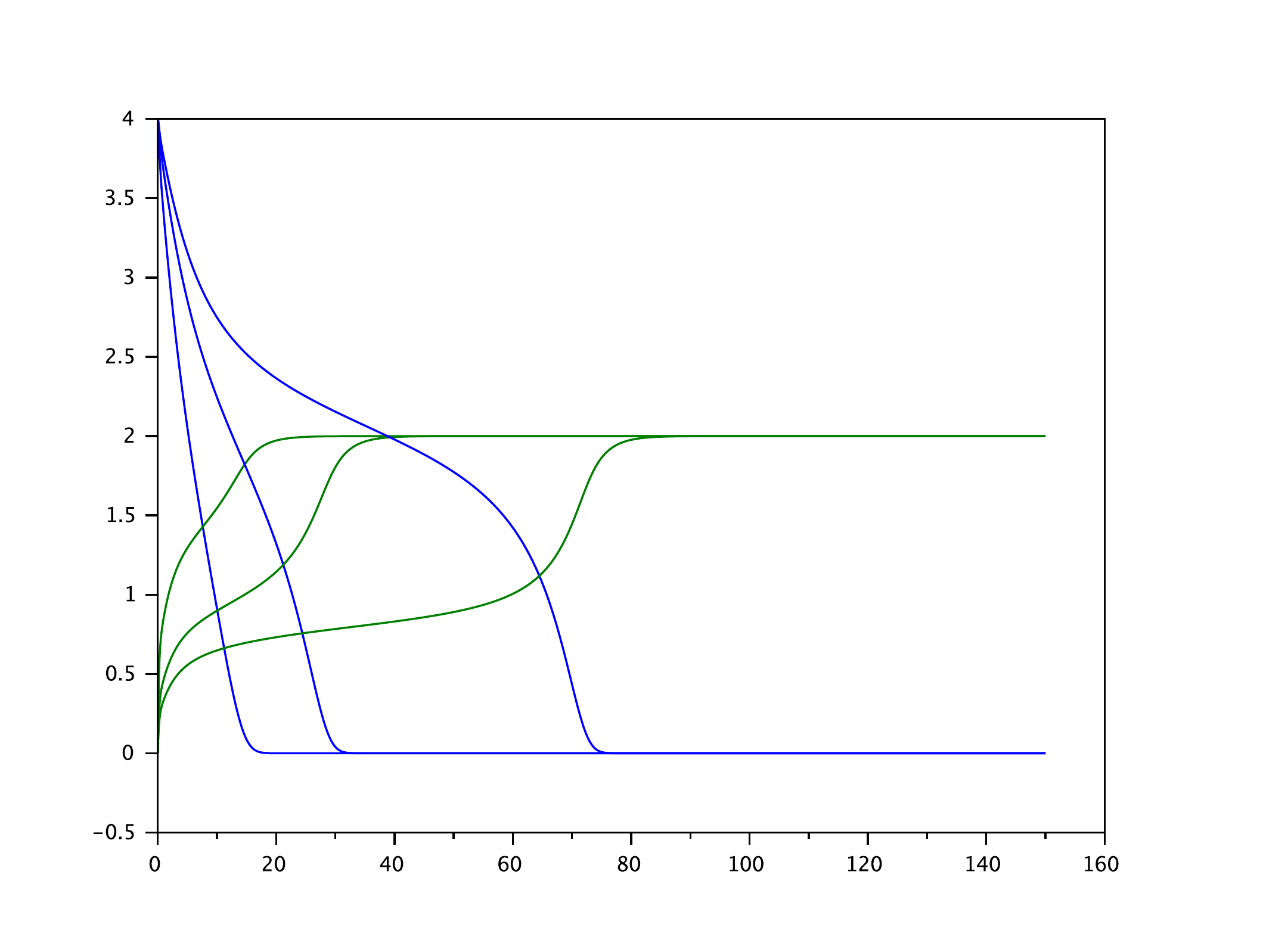}}
\caption{Evolution of {\color{blue} $L_U(t)$} and {\color{green} $L_W(t)$} as functions of time, for $K=1$, $K=0.5$ and $K=0.35$}
\label{fi4}
\end{center}
\end{figure}
\begin{figure}[h!]
\begin{center}
\mbox{\includegraphics[scale=0.5]{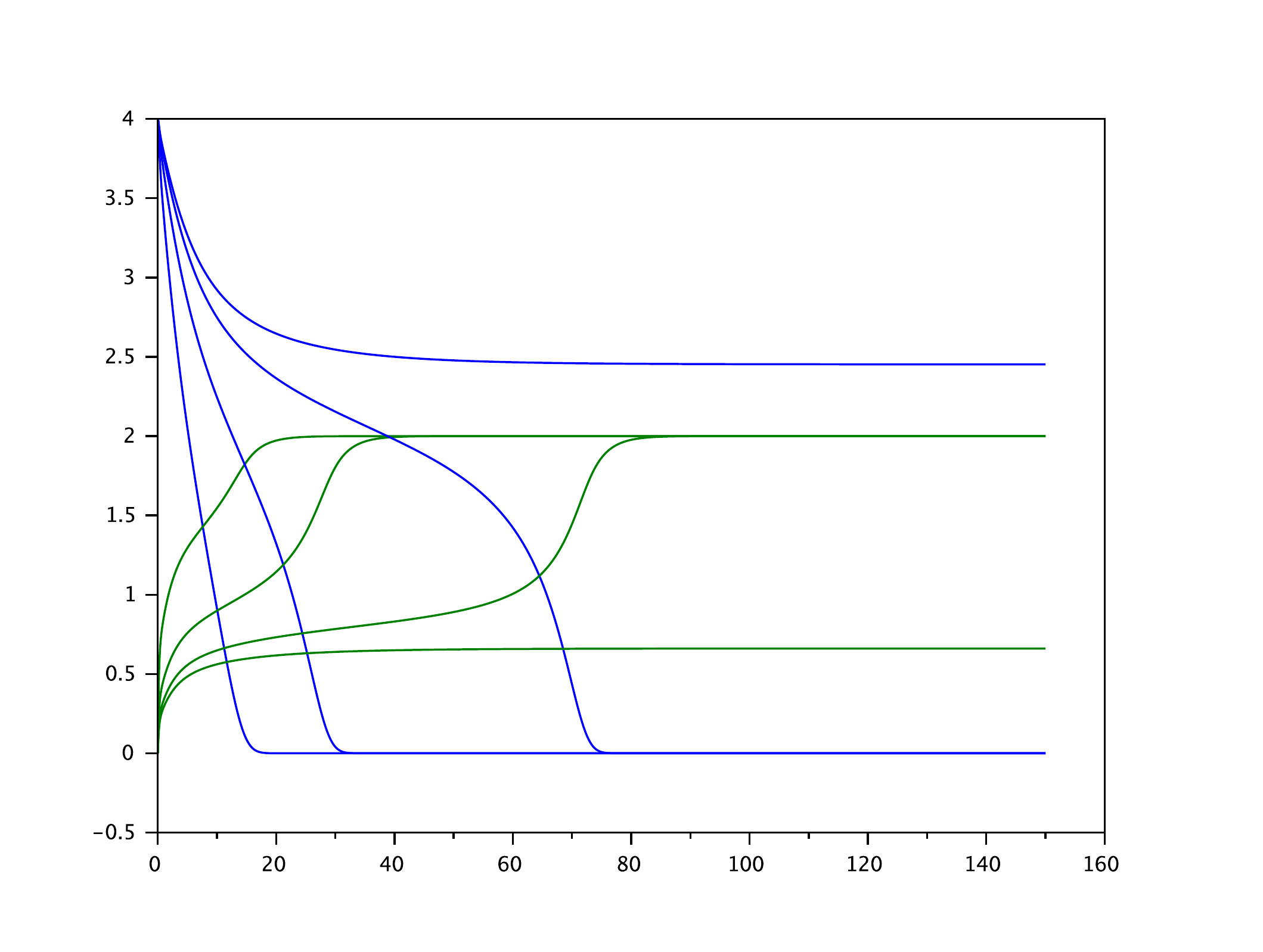}}
\caption{Evolution of {\color{blue} $L_U(t)$} and {\color{green} $L_W(t)$} as functions of time, for $K=1$, $K=0.5$, $K=0.35$ and $K=0.3$.
The last value is smaller than the critical value $K^*\simeq 0.318$, and a coexistence equilibrium appears asymptotically}
\label{fi5}
\end{center}
\end{figure}

Figures \ref{fi2} to \ref{fi5} show the evolution of the state variables $L_U$ (in {\color{blue} blue}) and $L_W$ (in {\color{green} green}) as functions of time.
The initial state is the {\em Wolbachia}-free equilibrium $x_{U,0}$, and the gain values are respectively chosen to be $1$, $0.5$, $0.35$ and $0.3$.
The last value, smaller than the critical value $K^*$, yields convergence to a coexistence equilibrium.

\section{Conclusions and further studies}
\label{se5}

We presented and analyzed a model for the infestation by bacterium {\em Wolbachia} of a population of mosquitoes --- typically one of the {\em genera Aedes} involved in the transmission of arboviroses such as yellow fever, dengue fever or chikungunya.
A method of implementation based on the introduction of a quantity of contaminated insects proportional to the size of the healthy population was proposed and shown, analytically and by simulation, to be capable to spread successfully the bacteria provided the gain is sufficiently large.
This feedback method requires continuous measurement of the population.
Its main interest with respect to the release(s) of a predefined quantity, is the reduction of the number of released mosquitoes, and thus of the treatment cost, without jeopardizing the success of the introduction of the bacteria --- something which can happen e.g.\ in case of underestimation of the initial population size.
To our knowledge, this is the first use of the control theory notion of feedback in such a context.

Among other steps leading to application, the adaptation to effective conditions has to be done. First, the model presented here has been chosen continuous in time for simplicity, but  passing to discrete-time system seems to present {\em a priori} no difficulties.
Also, the present framework assumes measurement of a larva stage of the healthy portion of the population, and as well release of larva stage of the contaminated one.
The practical conditions can be different, and the method can be adapted in consequence (leading though to similar, but different, convergence questions).
Last, issues of robustness with respect to the uncertainties of the parameters that describes the dynamics have not been tackled here.

An advantage of the present modeling framework is to open the way to comparisons with optimal policies --- for example the one that minimizes the total number of released mosquitoes, while succeeding in spreading {\em Wolbachia}.
This point will be studied in a next future.
Also, this framework provides a first basis to consider questions related to strategy improvement by mitigating several control principles, or to the complex phenomena of interaction between different vector species and different arboviruses, that may occur in the context of control of different diseases.

From a mathematical point of view, one of the difficulties of the study is that the system presents two stable equilibria, corresponding to {\em Wolbachia}-free situation and complete infestation.
While the key arguments are based on the theory of input-output monotone systems developed after \cite{Angeli:2003aa}, none of the posterior refinements to multivalued characteristics or quasi-characteristics allowed to establish formally the main convergence result, and adequate adaptation had to be achieved.
Extensions in this direction are presently studied.

\subsection*{Acknowledgements}

The first author is indebted to T.\ Gedeon for valuable discussions.
This work was done while the second author was a postdoctoral fellow at IMPA, funded by CAPES-Brazil.

\bibliographystyle{spbasic}

\bibliography{Biblio}

 \end{document}